\newcommand\FR{\textsf{FR}\xspace}
\newcommand\GAP{\textsf{GAP}\xspace}
\theoremstyle{plain}
\newtheorem*{theorem*}{Theorem}
\newcommand{\lacroix}{\tikz[baseline=-.5ex]{\draw[->,>=latex] (0,0) -- (4ex,0); \draw[->,>=latex] (1.8ex,2ex) -- (1.8ex,-2ex);}}
\newcommand{\auta}{{\mathcal M}}
\newcommand{\outs}[1]{{#1}^{\mathrm{out}}}
\newcommand{\lab}[3]{#1:{#2},{#3}}
\newcommand{\labb}[3]{#1\!:\!{\bf #2},{\bf #3}}
\newcommand{\presp}[1]{\langle{#1}\rangle_+}
\newcommand{\detee}[1]{\mathrm{det}\left({#1}\right)}
\newcommand\N{{\mathbb N}}
\newcommand\Z{{\mathbb Z}}
\newcommand\R{{\mathbb R}}
\newcommand\SPol{{\mathrm{SPol}}}
\newcommand\Pol{{\mathrm{Pol}}}
\newcommand\FEnd{{\mathrm{FEnd}}}
\newcommand\FAut{{\mathrm{FAut}}}
\newcommand\sss{r}
\newcommand\ttt{s}
\newcommand\WP{{\textsc{\texttt{Word Problem}}}\xspace}
\newcommand\OP{{\textsc{\texttt{Order Problem}}}\xspace}
\newcommand\lcm{{\rm lcm}}
\newcommand\resp{\hbox{\emph{resp.}}\xspace}
\newcommand\ie{\hbox{\emph{i.e.}}\xspace}
\newcommand\unit{\scalebox{.9}{$\mathds{1}$}}
\newcommand\dindex{m}
\newcommand\dperiod{\ell}
\newcommand{\Tree}{\Sigma^*}
\newcommand{\FE}{{\rm FEnd}(\Tree)}
\newcommand{\SP}[1]{{\mathrm{SPol}}\!\left({#1}\right)}
\newcommand{\SE}[1]{{\mathrm{SExp}}\!\left({#1}\right)}
\newcommand{\pnorm}[1]{\| #1\|_{\mathrm p}}
\newcommand{\snorm}[1]{\| #1\|_{\mathrm e}}
\newcommand{\size}[1]{\| #1\|}
\newcommand{\card}[1]{\# #1}
\newcommand{\sect}[2]{{{#1}@{#2}}}
\newcommand{\act}[2]{{#2}^{#1}}
\newcommand{\mot}[1]{\mathbf{#1}}
\newcommand{\NFA}{\textbf{NFA}\xspace}
\newcommand{\DFA}{\textbf{DFA}\xspace}
\tikzset{curlybrace/.style={rounded corners=2pt,line cap=round}}%
\def\cb@angle{#1},
\def\curlybrace{\pgfutil@ifnextchar[{\curly@brace}{\curly@brace[]}}%
\def\curly@brace[#1]#2#3#4{%
\pgfkeys{/curlybrace/.cd,
tip angle = 1.75}%
\pgfqkeys{/curlybrace}{#1}%
\ifnum 1>#4 \def\cbrd{0.105} \else \def\cbrd{0.075} \fi
\draw[/curlybrace/.cd,curlybrace,#1]  (#2:#4-\cbrd) -- (#2:#4) arc (#2:{(#2+#3)/2-\cb@angle}:#4) --({(#2+#3)/2}:#4+\cbrd) coordinate (curlybracetipn);
\draw[/curlybrace/.cd,curlybrace,#1] ({(#2+#3)/2}:#4+\cbrd) -- ({(#2+#3)/2+\cb@angle}:#4) arc ({(#2+#3)/2+\cb@angle} :#3:#4) --(#3:#4-\cbrd);
}
\begin{document}

\title{A new hierarchy\\ for automaton semigroups}
\author{Laurent Bartholdi \inst{1},
Thibault Godin\inst{2},
Ines Klimann\inst{3}, and
Matthieu Picantin\inst{3}}

\institute{\'Ecole Normale Sup\'erieure, Paris, and
Georg-August-Universit\"at zu G\"ottingen,\\
\email{laurent.bartholdi@ens.fr},\\
\and University of Turku, Turku, \\ \email{thibault.godin@utu.fi}, supported by the Academy of Finland grant 296018
\and
IRIF, UMR 8243 CNRS \& Universit\'e Paris Diderot,\\ \email{klimann@irif.fr, picantin@irif.fr}\\
}

\maketitle

\begin{abstract}
  We define a new strict and computable hierarchy for the family of
  automaton semigroups, which reflects the various asymptotic
  behaviors of the state-activity growth. This hierarchy extends that
  given by Sidki for automaton groups, and also gives new insights into
  the latter.  Its exponential part coincides with a notion of entropy
  for some associated automata.

  We prove that the \OP is decidable when the state-activity is
  bounded. The \OP remains open for the next level of this hierarchy,
  that is, when the state-activity is linear. Gillibert showed that it
  is undecidable in the whole family.

  The former results are implemented and will be available in the
  \GAP\ package~\FR\ developed by the first author.
\end{abstract}

\keywords{automaton \(\cdot\) semigroup \(\cdot\) entropy \(\cdot\) hierarchy \(\cdot\) decision problem}
 
\section{Introduction}\label{s:intro}

The family of automaton groups and semigroups has provided a wide playground
to various algorithmic problems in computational (semi)group theory~\cite{AKLMP12,Bar16,FR,BM17,BS10,BBSZ13,Gil14,Gil18,GK18,KMP12}.
While many undecidable questions in the world of (semi)groups remain
undecidable for this family, the underlying Mealy automata provide a combinatorial leverage
to solve the \WP for this family, and various other problems in some important subfamilies.
Recall that a Mealy automaton is a letter-to-letter, complete,
deterministic transducer with same input and output alphabet, so each of its states
induces a transformation from the set of words over its alphabet into itself.
Composing these Mealy transformations leads to so-called automaton
(semi)groups, and the \WP can be solved using a classical technique of minimization.

The \OP is one of the current challenging problems in computational
(semi)group theory. On the one hand, it was proven to be undecidable for automaton
semigroups by Gillibert~\cite{Gil14}.
On the other hand, Sidki introduced a polynomial hierarchy for
invertible Mealy transformations in~\cite{Sidki00} and, with
Bondarenko, Bondarenko, and Zapata in~\cite{BBSZ13}, solved the \OP
for its lowest level (bounded invertible automata).

\medskip Our main contributions in this paper are the following: an
activity-based hierarchy for possibly non-invertible Mealy
transformations (Section~\ref{s:hierarchy}), extending Sidki's
construction~\cite{Sidki00} to non-necessarily-invertible transformations;
and a study of the algorithmic properties in the lowest level of the
hierarchy, namely transducers with bounded activity. We prove:
\begin{theorem*}[see Section~\ref{s:order}]
  The \OP is decidable for bounded Mealy transformations; namely,
  there is an algorithm that, given a bounded initial Mealy automaton,
  decides whether the transformation $\tau$ that it defines has
  infinite order, and if not finds the minimal $r>s$ such that
  $\tau^r=\tau^s$.
\end{theorem*}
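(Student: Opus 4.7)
The plan is to combine two ingredients: the decidability of the Word Problem for automaton semigroups (which lets us test any single equality \(\tau^r=\tau^s\) by minimization), and a computable a priori bound \(N=N(\tau)\) with the property that if \(\tau\) has finite order, then the minimal witnessing pair \((r,s)\) satisfies \(r\le N\). Given such a bound, the algorithm simply enumerates all pairs \(0\le s<r\le N\) and runs the Word Problem on each; if none succeeds, \(\tau\) has infinite order, otherwise output the lexicographically smallest witness.

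To produce \(N\), I would proceed by induction on the depth in the tree \(\Sigma^*\), exploiting the identity \(\sect{a}{\tau^n}=\prod_{i=0}^{n-1}\sect{\tau^i(a)}{\tau}\). Since \(\Sigma\) is finite, the action of \(\tau\) on \(\Sigma\) is eventually periodic: there are computable \(s_0<r_0\le|\Sigma|\) with \(\tau^{r_0}=\tau^{s_0}\) on \(\Sigma\). The equality \(\tau^r=\tau^s\) (as elements of the semigroup) thus forces, for each \(a\in\Sigma\), an equality of products of one-step sections along the \(\tau\)-orbit of \(a\); since this orbit is a tail followed by a cycle, the product is naturally decomposed as a pre-periodic factor times a power of the ``cycle word''. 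Boundedness enters decisively here: by the structural characterization of the bounded level of the hierarchy developed in Section~\ref{s:hierarchy}, the cycles of the activity graph are pairwise disjoint and all but a bounded number of sections along them are trivial, which forces every cycle-word to sit at a strictly lower level of the hierarchy than \(\tau\) itself. Recursing on this lower level, we obtain in finitely many steps a system of constraints of the form ``certain explicitly named elements \(\sigma_1,\ldots,\sigma_k\) of lower level must satisfy \(\sigma_j^{r}=\sigma_j^{s}\) modulo adjustments by bounded pre-periodic words'', and an induction hypothesis supplies bounds for each \(\sigma_j\).

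The main obstacle, and the principal departure from the Bondarenko--Bondarenko--Sidki--Zapata argument in the group case~\cite{BBSZ13}, is the absence of inverses: \(\tau\)-orbits in \(\Sigma^k\) are no longer permutation cycles but rho-shaped paths, and in the semigroup setting one cannot cancel pre-periodic tails against inverses to reduce to pure cycle-product relations. Handling this requires tracking index and period separately at every level and proving that both stabilize uniformly; the boundedness hypothesis is what makes this uniform stabilization possible, because it limits the accumulation of non-trivial sections along the iterations. Once the recursion terminates at the ``trivial'' level (sections eventually equal to the identity), the accumulated bounds can be combined into the desired \(N(\tau)\), finishing the algorithm.
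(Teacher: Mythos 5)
Your overall strategy---compute a computable a priori bound \(N(\tau)\) and then settle each candidate relation \(\tau^r=\tau^s\) with the \WP---is sound in principle, and your decomposition of \(\sect{\tau^n}{a}=\prod_{i<n}\sect{\tau}{\act{\tau^i}{a}}\) into a pre-periodic factor times powers of a ``cycle-word'' is exactly the mechanism the paper encodes in its orbit signalizer graph \(\Phi(t)\), whose vertices are precisely the pairs (prefix-product, cycle-word) and whose edge labels record the index and period of each letter's orbit. You have also correctly identified the genuine difficulty relative to the group case (no inverses, rho-shaped orbits, index and period to be tracked separately).

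The gap is in your termination argument. You claim that boundedness ``forces every cycle-word to sit at a strictly lower level of the hierarchy than \(\tau\) itself'' and you recurse on the hierarchy level. This is false: the sections of a bounded transformation are again bounded, and products of bounded transformations are bounded (Proposition~\ref{p:closed}), so the cycle-words live in \(\SP{0}\) again, not in \(\SP{-1}\). Concretely, for \(t_0=(\unit,t_0)[2,2]\) of Example~\ref{e:spol-zero}, the letter \(2\) is a fixed point of the induced action and the associated cycle-word is \(\sect{t_0}{2}=t_0\) itself, so your recursion calls itself on the same element and never terminates; note also that \(t_0\) has infinite order, and a pure ``bound-and-enumerate'' scheme can only certify infinite order after the bound has actually been produced, so non-termination of the bound computation is fatal. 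The correct finiteness argument is of a different nature: one shows that the set of pairs (prefix-product, cycle-word) reachable in the recursion is \emph{finite}, because each such product has at most \(C=\sup_n\alpha_\tau(n)\) nontrivial factors, all drawn from the finite stateset of \(\auta_\tau\); this is the content of the final proposition of Section~\ref{s:order}. Once the recursion is known to close up into a finite graph, infinite order is read off from the presence of a reachable cycle with positive index-cost or with period-cost \(>1\), and otherwise the index and period are assembled from the edge labels as in Proposition~\ref{prop-eitheror}. Without replacing your descent claim by such a finiteness argument, no computable \(N(\tau)\) is obtained and the algorithm does not exist.
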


Our strategy of proof follows closely that of Sidki's~\cite{Sidki00}
and Bondarenko, Bondarenko, Sidki and Zapata~\cite{BBSZ13}, with some
crucial differences. On the one hand, a naive count of the number of
non-trivial states of a transformation does not yield a useful
invariant, nor a hierarchy stable under multiplication; on the other
hand, the structure of cyclic semigroups
($\langle a \mid a^m =a^{m+n}\rangle$ has index $m$ and period
$n$) is more complex than that of
cyclic groups ($\langle a\mid a^m\rangle$ has order $m$).

\section{Notions from automata and graph theory}\label{s:graph}

This section gathers some basics about automata, especially some links
between automata, Mealy automata, automaton semigroups, and
finite-state transformations.  We refer the reader to handbooks for
graph theory~\cite{Rig16}, automata theory~\cite{Sak09}, and automaton
(semi)groups~\cite{BS10}.

\medskip
A \emph{non-deterministic finite-state automaton} (\NFA for short) is
given by a directed graph with finite vertex set $Q$, a set of edges
$\Delta$ labeled by an alphabet $\Sigma$, and two distinguished subsets
of vertices \(I\subseteq Q\) and \(F\subseteq Q\). The vertices of the
graph are called \emph{states} of the automaton and its edges are
called \emph{transitions}. The elements of \(I\) and \(F\) are called
respectively \emph{initial} and \emph{final} states.  A transition
from the state~\(p\) to the state~\(q\) with label~\(x\) is denoted
by~\(p\xrightarrow{\ x\ }q\).

\smallskip
A \NFA is \emph{deterministic}---\DFA for short---(\resp
\emph{complete}) if for each state~\(q\) and each letter~\(x\), there
exists at most (\resp at least) one transition from \(q\) with
label~\(x\).  Given a word~\(\mot{w}=w_1w_2\cdots w_n \in \Sigma^*\)
(where the \(w_i\) are letters), a \emph{run with label~\(\mot{w}\)}
in an automaton (\NFA\ or \DFA) is a sequence of consecutive
transitions
\[q_1\xrightarrow{\ w_1\ }q_2\xrightarrow{\ w_2\ }
  q_3\rightarrow\cdots \rightarrow q_{n}\xrightarrow{\ w_n\
  }q_{n+1}\:.
\]
Such a run is \emph{successful} whenever \(q_1\) is an initial state
and \(q_{n+1}\) a final state.  A word in $\Sigma^*$ is
\emph{recognized} by an automaton if it is the label of at least one
successful run. The \emph{language} recognized by an automaton is the
set of words it recognizes.  A \DFA is \emph{coaccessible} if each
state belongs to some run ending at a final state.

Let~\(\mathcal A\) be a \NFA with stateset~\(Q\). The Rabin--Scott
powerset construction~\cite{RaSc59} returns, in a nutshell, the
(co)accessible \DFA---denoted by~\(\detee{\mathcal A}\)---with states
corresponding to subsets of~\(Q\), whose initial state is the subset
of all initial states of \(\mathcal A\) and whose  final states are
the subsets containing at least on final state of~\(\mathcal A\); its transition labeled by~\(x\) from a state~\(S\subseteq 2^Q\) leads to the state~\(\{q\:|\: \exists p \in S, p\xrightarrow{x}q \text{ in }\mathcal A\}\).
Notice that the size of the resulting~\DFA might therefore be exponential in the size of the original~\NFA.

\medskip
The \emph{language} of a \NFA is the subset of $\Sigma^*$ consisting
in the words recognized by it. Given a language~\(L\subseteq\Sigma^*\), its
\emph{entropy} is
\[h(L) = \lim_{\ell \to \infty} \quad \frac{1}{\ell}\ {{\log \#\left(L
        \cap \Sigma^\ell \right)}} .
\]
This quantity appears in various situations, in particular for
subshifts~\cite{LiMa95} and finite-state automata~\cite{ChMi58}. We
shall see how to compute it with matrices.

To a \NFA $\mathcal A$, associate its \emph{transition matrix}
\(A = \{A_{i,j}\}_{i,j} \in \N^{n\times n}\) where~\(A_{i,j}\) is the
number of transitions from~\(i\) to~\(j\). Let furthermore $v\in\N^n$
be the row vector with `$1$' at all positions in $I$ and $w\in\N^n$ be
the column vector with `$1$' at all positions in $F$. Then
$v A^\ell w$ is the number of successful runs in $\mathcal A$ of
length $\ell$. Assuming furthermore that $\mathcal A$ is
deterministic, $v A^\ell w$ is the cardinality of $L\cap\Sigma^\ell$.

Since the transition matrix of an automaton $\mathcal A$ is
non-negative, it admits a positive real eigenvalue of maximal absolute
value, which is called its Perron-Frobenius eigenvalue and is
written~\(\lambda(\mathcal A)\). Assuming therefore that $\mathcal A$ is coaccessible, we get
\begin{proposition}{\rm\cite[Theorem~1.2]{Shu08a}}\label{prop-entraut}
  Let \(\mathcal{A}\) be a coaccessible \DFA recognizing the language
  \(L\). Then we have \(h(L) = \log \lambda({\mathcal{A}})\).
\end{proposition}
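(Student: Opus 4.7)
The plan is to relate $\#(L\cap\Sigma^\ell)$ directly to powers of the transition matrix $A$ and then read off the exponential growth rate from spectral theory. As already observed in the paragraph preceding the statement, determinism gives $\#(L\cap\Sigma^\ell) = vA^\ell w$, since each word recognized by $\mathcal A$ labels exactly one successful run. The task thus reduces to showing that $\lim_{\ell\to\infty}\frac{1}{\ell}\log(vA^\ell w) = \log\lambda(\mathcal A)$.

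For the upper bound I would invoke Gelfand's spectral radius formula: in any submultiplicative matrix norm, $\|A^\ell\|^{1/\ell}\to\lambda(\mathcal A)$. Bounding $vA^\ell w \leq \|v\|\,\|A^\ell\|\,\|w\|$ then yields $\limsup_{\ell}\frac{1}{\ell}\log(vA^\ell w)\leq\log\lambda(\mathcal A)$ at once.

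The lower bound is the delicate half, and is exactly where coaccessibility enters. First I would restrict $\mathcal A$ to its accessible part: states unreachable from $I$ contribute nothing to $vA^\ell w$, and deleting them cannot increase the spectral radius. On the resulting accessible and coaccessible automaton I would decompose the stateset into strongly connected components and put $A$ in block upper-triangular form; the spectral radius $\lambda(\mathcal A)$ is then the maximum of the spectral radii of the diagonal blocks. Fix an SCC $S$ realizing this maximum. By accessibility and coaccessibility, $S$ lies on some path from an initial state to a final state. Inside $S$ the transition matrix is irreducible and nonnegative, so Perron--Frobenius gives that the number of length-$k$ walks between two fixed vertices of $S$ is $\Theta(\lambda(S)^k)$, restricted to the arithmetic progression dictated by the period of $S$. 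Concatenating a bounded entry path into $S$, $k$ internal steps, and a bounded exit path to $F$ then produces at least of the order of $\lambda(\mathcal A)^\ell$ successful runs of length $\ell$, after smoothing over a short window of consecutive lengths to absorb that period.

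The main obstacle is precisely the compatibility between the Perron--Frobenius eigenvector of the dominant SCC and the boundary vectors $v,w$: without coaccessibility a spectrally dominant SCC could be a dead end invisible to $w$, and the conclusion $h(L)=\log\lambda(\mathcal A)$ would fail outright. The periodicity of the dominant SCC is the other minor nuisance, handled as described by a window argument, since the definition of $h(L)$ is a limit rather than an \emph{a priori} uniform asymptotic.
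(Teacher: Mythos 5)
The paper does not prove this proposition at all: it is imported verbatim from \cite[Theorem~1.2]{Shu08a}, so there is no in-paper argument to compare yours against. On its own merits, your outline is the standard one (Gelfand's formula for the upper bound; block-triangularization into strongly connected components plus Perron--Frobenius on a dominant component threaded from $I$ to $F$ for the lower bound), and it is essentially sound.

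One step is pointed in the wrong direction, and it is exactly the step you identify as delicate. For the lower bound you prune the inaccessible states and observe that this ``cannot increase the spectral radius''; but an increase is harmless there --- what you must rule out is a \emph{decrease}. Coaccessibility alone does not rule it out: a spectrally dominant SCC could be coaccessible yet unreachable from $I$, in which case $h(L)<\log\lambda(\mathcal A)$ and the stated equality fails (the result really concerns trim automata, accessible \emph{and} coaccessible; in this paper the hypothesis is harmless only because the \DFA{}s at hand arise from the Rabin--Scott construction and are accessible by construction). So you should either add accessibility as a hypothesis or argue that some SCC realizing $\lambda(\mathcal A)$ is reachable from $I$. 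A second, smaller point: with the paper's definition of $h(L)$ as a genuine limit, a periodic dominant SCC can force $\#(L\cap\Sigma^\ell)=0$ along a whole residue class (e.g.\ $L=(aa)^*$), so the limit need not exist; your window argument really establishes the statement for the upper limit, which is the form in which the result is usually stated and the form actually used later in the paper.
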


\subsection{Mealy transducers}
A \emph{Mealy automaton} is a \DFA over an alphabet of the form
$\Sigma\times\Sigma$. If an edge's label is $(x,y)$, one calls $x$ the
\emph{input} and $y$ the \emph{output}, and denotes the transition
by~\(p\xrightarrow{\ x|y\ }q\). Such a Mealy automaton~\(\auta\) is
assumed to be complete and deterministic in its inputs: for every
state~\(p\) and letter~\(x\), there exists exactly one transition
from~\(p\) with input letter~\(x\). We denote by~\(\act{p}{x}\) its
corresponding output letter and by~\(\sect{p}{x}\) its target state,
so we have
\begin{center}
\begin{tikzpicture}[->,>=latex,node distance=36mm,inner sep=0pt]
  \node[state] (0) {\(p\)};
  \node[state,right of=0] (1) {\(\sect{p}{x}\)};
  \path (0) edge node[above] {\(x|\act{p}{x}\)} (1);
\end{tikzpicture}
\end{center}

In this way, states act on letters and letters on states.
Such actions can be composed in the following way: let
\(\mot{q}\in Q^*\), \(p\in 
Q\), \(\mot{u}\in\Sigma^*\), and \(x\in\Sigma\), we have
\[\act{\mot{q}p}{x} = \act{p}{(\act{\mot{q}}{x})}\quad\text{ and }\quad
\sect{p}{(\mot{u}x)} = \sect{(\sect{p}{\mot{u}})}{x}\:.\]

We extend recursively the actions of states on letters and of letters on states (see below left).
Compositions can be more easily understood via an alternative representation by a~\emph{cross-diagram} \cite{AKLMP12} (below right).

For all~\(x \in \Sigma\), \(\mot{u} \in \Sigma^*\), \(p\in Q\), \(\mot{q}\in Q^*\), we have:
\begin{center}
\begin{minipage}{.4\textwidth}
\[\act{\mot{q}}{(\mot{u}x)} =
\act{\mot{q}}{\mot{u}}\act{\sect{\mot{q}}{\mot{u}}}{x}\]
and
\[\sect{(\mot{q}p)}{\mot{u}} =
\sect{\mot{q}}{\mot{u}}\cdot\sect{p}{\act{\mot{q}}{\mot{u}}}\:.\]
\end{minipage}
\begin{minipage}{.4\textwidth}
\vspace*{-5pt}
\[\begin{array}{ccccc}
 & \mot{q} & & p   &\\
\mot{u} & \lacroix & \act{\mot{q}}{\mot{u}} & \lacroix & \act{\mot{q}p}{\mot{u}} \\
  &\sect{\mot{q}}{\mot{u}}  & & \sect{p}{\act{\mot{q}}{\mot{u}}} & \\
x      &\lacroix  & \act{\sect{\mot{q}}{\mot{u}}}{x} & & \\
  &\sect{\mot{q}}{\mot{u}x} & & &
\end{array}\]
\end{minipage}
\end{center}

The mappings defined above are length-preserving and prefix-preserving. Note that in particular the image of the empty word is itself.

\smallbreak From an algebraic point of view, the composition gives a semigroup structure to the set
of transformations~\(\mot{u}\mapsto\act{\mot{q}}{\mot{u}}\) for~\(\mot{q}\in Q^*\). This
semigroup is called \emph{the semigroup generated by~\(\auta\)} and
denoted by~\(\presp{\auta}\).
An \emph{automaton semigroup} is a semigroup which can be generated by
a Mealy automaton. Any element of such an automaton semigroup
induces a so-called \emph{finite-state} transformation.

\smallbreak Conversely, for any transformation~\(t\) of~\(\Sigma^*\)
and any word~\(\mot{u} \in \Sigma^*\), we denote by
\(\act{t}{\mot{u}} \) the image of~\(\mot{u}\) by~\(t\), and by
\(\sect{t}{\mot{u}}\) the unique transformation~\(s\) of~\(\Sigma^*\)
satisfying \(\act{t}{(\mot{uv})}=\act{t}{\mot{u}}\act{s}{\mot{v}}\)
for
any~\(\mot{v}\in\Sigma^*\).
Whenever~\(Q(t)=\{\sect{t}{\mot{u}}:\mot{u}\in\Sigma^*\}\) is finite,
\(t\) is said to be~\emph{finite-state} and admits a unique (minimal)
\emph{associated Mealy automaton}~\(\auta_t\) with stateset~\(Q(t)\).

We also use the following convenient notation to define a finite-state
transformation $t$: for each $u\in Q(t)$, we write an equation (traditionally called \emph{wreath recursion} in the algebraic theory of automata) of the
following form
\[u=(\sect{u}{x_1},\ldots,\sect{u}{x_{|\Sigma|}})\sigma_u,\]
where~\(\sigma_u=[\act{u}{x_1},\ldots,\act{u}{x_{|\Sigma|}}]\) denotes
the transformation on~\(\Sigma\) induced by~\(u\).

\smallbreak We consider the semigroup~\(\FE\) of those finite-state transformations of~\(\Sigma^*\).

\begin{example}\label{e:spol-zero} The transformation~\(t_0=(\unit,t_0)[2,2]\) belongs to~\(\FEnd(\{1,2\}^*)\) with~\(Q(t_0)=\{\unit,t_0\}\).
See Examples~\ref{e:spol} and~\ref{e:semST} for further details about~\(t_0\).
\end{example}

\begin{example}\label{e:activity-zero} The transformation~\(p=(q,r)\) with~\(q=(r,\unit)\)
and~\(r=(r,r)[2,2]\) also belongs to~\(\FEnd(\{1,2\}^*)\) with~\(Q(p)=\{\unit,p,q,r\}\).
See Figure~{\bf\ref{f:activity}} for~\(\auta_p\).
\end{example}

\section{An activity-based hierarchy for~$\FE$}\label{s:hierarchy}

In this section we define a suitable notion of activity for finite-state transformations, together with two norms, from which we build a new hierarchy.
We will prove its strictness and its computability in Section~\ref{s:charac}.

\medskip  
For any element~$t\in\FE$, we define its \emph{activity} (see Figure~{\bf \ref{fig-acttree}}) as
\[\alpha_t:n\longmapsto\#\{\mot{v}\in\Sigma^n:\exists \mot{u}\in\Sigma^n,~\sect{t}{\mot{u}}\not=\unit~\hbox{and}~\act{t}{\mot{u}}=\mot{v}\}. \label{eq:act}\]

\begin{figure}
	\begin{center}
			\subfloat[]{\label{f:activity}
			\scalebox{0.8}{
			\begin{tikzpicture}[->,>=latex,node distance=23mm,inner sep=0,minimum size=7mm]
				\begin{scope}
				\node[draw,circle,minimum width=19pt] (p) {$p$};
				\node[draw,circle,minimum width=19pt,above  right  = 10mm and 20mm of p] (r) {$r$};
				\node[draw,circle,minimum width=19pt,below right = 10mm and 20mm of p] (q) {$q$};
				\node[draw,circle,minimum width=19pt,below right = 10mm and 20mm of r] (one) {$\unit$};

				\path	(r)	edge[loop right] 		node[right,align=center]{\(1|2\)\\ \(2|2\)}	(r)
					(p)	edge					node[below left,pos=.4]{\(1|1\)}	(q)
					(p)	edge[] 		node[above left, pos=.4]{\(2|1\)}	(r)
					(q)	edge				node[right]{\(1|1\)}	(r)
					(q)	edge				node[below right,pos=.4]{\(2|2\)}	(one)
					(one)	edge[loop above]		node[above,align=center]{\(1|1\)\\\(2|2\)}	(one)
			 ;
				\end{scope}
			\end{tikzpicture}
			}
			}
	\quad\quad\quad\subfloat[]{\label{fig-treet0}
	\scalebox{.85}{
	\begin{tikzpicture}[thick,level distance=1.5cm,
level 1/.style={sibling distance=3cm},
level 2/.style={sibling distance=1cm},
tree node/.style={circle,draw},
every child node/.style={tree node}]
		\begin{scope}
		\node[tree node] {\color{black}  \(p\)} [sibling distance=1.2cm]
		  child { node[] {\color{black} \(q\)} [sibling distance=.6cm]
				  child { node[] (c) { \color{black}  \(r\)} [sibling distance=.3cm]  edge from parent node[left,draw=none] {\(\color{black!40}{1}|\color{black}{1}\)} 
				        }
				  child { node[black!40,inner sep=2.75pt] {\(\unit\)} [sibling distance=.3cm,black!40] edge from parent node[right,draw=none] {\(\color{black!40}{2}|2\)} 
				        }
				edge from parent node[left=0.2cm,draw=none] {\(\color{black!40}{1}|\color{black}{1}\)} 
				}
		  child { node[] {\color{black}  \(r\)} [sibling distance=.6cm]
				  child { node[] {\color{black}  \(r\)} [sibling distance=.3cm] edge from parent node[left,draw=none] {\(\color{black!40}{1}|\color{black}{2}\)} 
				        }
				  child { node[] (a) {\color{black} \(r\)} [sibling distance=.3cm]	 edge from parent node[right,draw=none] {\(\color{black!40}{2}|\color{black}{2}\)} 			          
				        }
				edge from parent node[right=0.2cm,draw=none] {\(\color{black!40}{2}|\color{black}{1}\)} 
				};
		\end{scope}
		\end{tikzpicture}
		}
}%
\caption{ {\bf\small(a)} An example of a transformation~\(p\) with~\(\alpha_p(1)=1\) and~\(\alpha_p(2)=2\). {\bf\small(b)}  The transformation \(p\) induces~\(3\) nontrivial  transformations on level~\(2\): the leftmost one is associated with the output~\(11\), the middle right one with~\(12\) and the rightmost one with~\(12\), hence nontrivial transformations can  be reached by runs with only two different output words. }\label{fig-acttree}
\end{center}
\end{figure}

We next define two norms on~$\FE$.
When~$\alpha_t$ has polynomial growth, namely when the set~$D=\{d:\lim_{n\rightarrow \infty}\frac{\alpha_t(n)}{n^d}=0\}$ is nonempty,
then we define $\pnorm{t}=\min D-1$. 
Otherwise, the value of~$\lim_{n\rightarrow\infty}\frac{\log\alpha_t(n)}{n}$ is denoted by~$\snorm{t}$.

\medskip We then define the following classes of finite-state transformations:
\begin{align*}
\SP{d}&=\{\ t\in\FE:\pnorm{t}\leq d\ \}\\ \hbox{and}~~
\SE{\lambda}&=\{\ t\in\FE:\snorm{t}\leq\lambda\ \}\:.
\end{align*}

We shall see in Theorem~\ref{t:hierarchy} that these yield a strict and computable hierarchy for~$\FE$.
The following basic lemma is crucial:

\begin{lemma}\label{lem-subadditive} For each \(n\geq 0\), the map~$t\mapsto\alpha_t(n)$ is subadditive.
\end{lemma}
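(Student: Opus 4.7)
The goal is to show $\alpha_{st}(n)\le\alpha_s(n)+\alpha_t(n)$ for arbitrary $s,t\in\FE$ and $n\ge0$. Denote by $V_r(n)=\{\mot{v}\in\Sigma^n:\exists\mot{u}\in\Sigma^n,\ \sect{r}{\mot{u}}\ne\unit\text{ and }\act{r}{\mot{u}}=\mot{v}\}$, so that $\alpha_r(n)=\#V_r(n)$. The plan is to partition $V_{st}(n)$ into two subsets, one injecting into $V_t(n)$ and the other into $V_s(n)$, using the wreath recursion for the section of a composition.

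Recall from the cross-diagram identities that, for any input word $\mot{u}$ of length $n$, one has $\act{st}{\mot{u}}=\act{t}{\act{s}{\mot{u}}}$ and $\sect{st}{\mot{u}}=\sect{s}{\mot{u}}\cdot\sect{t}{\act{s}{\mot{u}}}$. In particular, the non-triviality of $\sect{st}{\mot{u}}$ forces at least one of $\sect{s}{\mot{u}}$ or $\sect{t}{\act{s}{\mot{u}}}$ to be non-trivial. For each $\mot{v}\in V_{st}(n)$, select one witness $\mot{u}$ (i.e.\ an input with $\sect{st}{\mot{u}}\ne\unit$ and $\act{st}{\mot{u}}=\mot{v}$); we split $V_{st}(n)=A\sqcup B$ according to whether the chosen witness satisfies $\sect{t}{\act{s}{\mot{u}}}\ne\unit$ (put $\mot{v}$ in $B$) or $\sect{t}{\act{s}{\mot{u}}}=\unit$ (put $\mot{v}$ in $A$, in which case the witness necessarily has $\sect{s}{\mot{u}}\ne\unit$).

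For $\mot{v}\in B$, taking $\mot{u}'=\act{s}{\mot{u}}$ directly exhibits $\mot{v}=\act{t}{\mot{u}'}$ with $\sect{t}{\mot{u}'}\ne\unit$, so $\mot{v}\in V_t(n)$; hence $\#B\le\#V_t(n)$. For $\mot{v}\in A$, setting $\mot{u}'=\act{s}{\mot{u}}$ gives an element of $V_s(n)$ (because $\sect{s}{\mot{u}}\ne\unit$). The assignment $g:A\to V_s(n)$, $\mot{v}\mapsto\mot{u}'$, is injective: since $t$ is a length-preserving mapping of $\Sigma^*$, knowing $\mot{u}'$ determines $\mot{v}=\act{t}{\mot{u}'}$ uniquely, regardless of the arbitrary choice of witness. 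Therefore $\#A\le\#V_s(n)$, and combining the two bounds yields $\alpha_{st}(n)=\#A+\#B\le\alpha_s(n)+\alpha_t(n)$.

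There is no real obstacle here; the only subtle point is to make sure the partition of $V_{st}(n)$ and the injection into $V_s(n)$ in the "trivial $t$-section" case are well defined, which is handled by fixing one witness per $\mot{v}$ and exploiting that $t$ is a genuine function on $\Sigma^n$. The inequality is sharp enough to be used later when comparing $\alpha_{st}(n)$ with $\alpha_s(n)$ and $\alpha_t(n)$ in the polynomial and exponential regimes defining $\pnorm{\cdot}$ and $\snorm{\cdot}$.
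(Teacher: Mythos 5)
Your proof is correct and takes essentially the same route as the paper's: the key observation that $\sect{(st)}{\mot{u}}\not=\unit$ forces $\sect{s}{\mot{u}}\not=\unit$ or $\sect{t}{\act{s}{\mot{u}}}\not=\unit$, followed by a count of \emph{output} words. The paper leaves the counting step implicit in ``we deduce''; you spell out the witness-based partition and the injection $A\to V_s(n)$ (using that $t$ is a function on $\Sigma^n$), which is precisely the point that makes the output-counting activity $\alpha$ subadditive where the input-counting one would fail for non-invertible transformations.
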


\begin{proof}
Assume~$s,t\in\FE$.

\begin{minipage}[t]{0.46\linewidth}
For any~$\mot{u}\in\Sigma^n$ with~$\sect{(st)}{\mot{u}}\not=\unit$,
we~have either~$\sect{s}{\mot{u}}\not=\unit$ or~$\sect{t}{\act{s}{\mot{u}}}\not=\unit$.\\
We deduce $\alpha_{st}(n)\leq\alpha_{s}(n)+\alpha_{t}(n)$ for each~$n\geq 0$.
\end{minipage}
\hfill
\begin{minipage}[t]{0.5\linewidth}   
\vspace{-0.75cm}
\begin{tikzpicture}[thick,node distance=13mm]
	\begin{scope}[thick,node distance=16mm,xshift=50mm]
	\node (01) at (0,1.25) {$\mot{u}$};
	\node (10) at (.75,2)  {$s$};
	\node (12) [below of=10] {$\sect{s}{\mot{u}}$};
	\node (11) [node distance=20mm,right of=01] {$\mot{v}=\act{s}{\mot{u}}$};
	\node (20) [node distance=23mm,right of=10] {$t$};
	\node (22) [node distance=16mm,below of=20] {$\sect{t}{\mot{v}}$};
	\node (31) [node distance=23mm,right of=11] {$\mot{w}=\act{t}{\mot{v}}$};
	\node (44) [node distance=16mm,below of=11] {$\sect{(st)}{\mot{u}}\not=\unit$};
	\path[->,>=latex]
		(01)	edge  (11)
		(10)	edge  (12)
		(11)	edge  (31)
		(20)	edge  (22);
	\draw[decorate,decoration={brace,amplitude=7pt,aspect=.5}]   (22.south east)--(12.south west);
	\end{scope}
\end{tikzpicture}

\vspace*{-15pt}
\end{minipage}
\end{proof}

We deduce that \(\pnorm{.}\) and~\(\snorm{.}\) are respectively \(+\)- and~\(\max\)-subadditive.

\begin{proposition}\label{p:closed} Let~\(\Sigma\) be an alphabet.
For every integer~\(d\geq -1\), \(\SP{d}\) is a subsemigroup of~\(\FE\).
So is~\(\SE{\lambda}\) for every~\(0 \leq \lambda\leq\#\Sigma\).
\end{proposition}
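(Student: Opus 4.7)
The plan is to deduce both closure statements directly from Lemma~\ref{lem-subadditive}. The bound $\alpha_{st}(n)\leq\alpha_s(n)+\alpha_t(n)$ says that multiplying two finite-state transformations can at most \emph{add} their activity functions pointwise, so everything reduces to checking that polynomial growth of bounded degree, and exponential growth of bounded rate, are each preserved under pointwise sums of sequences.

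For $\SP{d}$, I would fix $s,t\in\SP{d}$ and an arbitrary real $e>d+1$. Then $e$ lies in the defining sets~$D$ of both $\pnorm{s}$ and $\pnorm{t}$, so $\alpha_s(n)/n^e\to 0$ and $\alpha_t(n)/n^e\to 0$; the Lemma then yields $\alpha_{st}(n)/n^e\to 0$. Hence every real $e>d+1$ belongs to the $D$-set of $st$, which gives $\pnorm{st}\leq d$ and thus $st\in\SP{d}$.

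For $\SE{\lambda}$, I would fix $s,t\in\SE{\lambda}$ and use the crude consequence $\alpha_{st}(n)\leq 2\max(\alpha_s(n),\alpha_t(n))$ of the Lemma. Taking logarithms, dividing by $n$, and letting $n\to\infty$ gives
\[
\snorm{st}\leq\max\bigl(\snorm{s},\snorm{t}\bigr)\leq\lambda,
\]
so $st\in\SE{\lambda}$. The only subtlety I anticipate is handling the degenerate endpoints where an activity function is eventually zero (so $\log\alpha$ is undefined along a tail): there the inequalities become trivial, and the case can be dispatched by observing that $\alpha_t(n)\in\N$, so $\alpha_t(n)\to 0$ forces $\alpha_t$ eventually zero and puts $t$ at the very bottom of the polynomial hierarchy. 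Beyond this bookkeeping, no real obstacle arises; the proposition is essentially a one-line corollary of Lemma~\ref{lem-subadditive} together with the elementary fact that $\log(a+b)\leq\log 2+\max(\log a,\log b)$.
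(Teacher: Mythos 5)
Your proof is correct and follows essentially the same route as the paper, which states Proposition~\ref{p:closed} without a separate proof, as an immediate consequence of Lemma~\ref{lem-subadditive} and the resulting subadditivity of the two norms. If anything, you are slightly more precise: closure of \(\SP{d}\) for a fixed \(d\) requires the bound \(\pnorm{st}\leq\max(\pnorm{s},\pnorm{t})\), which you actually establish (via \(\alpha_{st}(n)\leq\alpha_s(n)+\alpha_t(n)=O(n^{\max})\)), whereas the paper's accompanying remark only records \(+\)-subadditivity of \(\pnorm{\cdot}\).
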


As an easy corollary of Proposition~\ref{p:closed},
the subadditivity property allows us to compute the hierarchy class
of any given Mealy automaton by considering only its generators.

\begin{theorem}\label{t:hierarchy} Let~\(\Sigma\) be an alphabet.
The elements of the semigroup~\(\FE\) can be graded according to the following strict 
hierarchy:
for any~\(d_1,d_2\in \Z\) with~\(-1<d_1<d_2\) and
any~\(\lambda_1,\lambda_2\in \R\)
with~\(0<\lambda_1<\lambda_2<\#\Sigma\),  we have:
\[\SP{-1}\subsetneq\cdots\subsetneq\SP{d_1}\subsetneq\cdots\subsetneq\SP{d_2}\subsetneq\cdots\subsetneq\SE{0}\]
\[\subsetneq\SE{\lambda_1}\subsetneq\cdots\subsetneq\SE{\lambda_2}\subsetneq\cdots\subsetneq\SE{\#\Sigma}.\]
\end{theorem}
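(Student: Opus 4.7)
The plan is to treat the chain of inclusions and the strictness of each inclusion separately. The inclusions are immediate from the definitions: both $\pnorm\cdot$ and $\snorm\cdot$ are monotone, so $\SP{d_1}\subseteq\SP{d_2}$ whenever $d_1\le d_2$, and likewise $\SE{\lambda_1}\subseteq\SE{\lambda_2}$ for $\lambda_1\le\lambda_2$; the embedding $\bigcup_d\SP{d}\subseteq\SE{0}$ holds because polynomial $\alpha_t$ forces $\log\alpha_t(n)/n\to 0$. Proposition~\ref{p:closed} simultaneously confirms that every class is a subsemigroup of $\FE$, so the whole chain sits inside $\FE$.

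For strictness at the polynomial levels I would follow Sidki's construction~\cite{Sidki00}. Example~\ref{e:spol-zero} already produces a $t_0$ with constant activity, so $\pnorm{t_0}=0$ and $t_0\notin\SP{-1}$. For $d\ge 1$, one defines $t_d$ by a wreath recursion of the shape $t_d=(t_{d-1},t_d,\unit,\ldots,\unit)\sigma$; a direct count shows that the inputs giving a nontrivial section at depth $n$ are in bijection with the $(d{+}1)$-compositions of $n$, hence $\alpha_{t_d}(n)=\Theta(n^d)$ and $t_d\in\SP{d}\setminus\SP{d-1}$. One has to check that the output-based count in the definition of $\alpha_t$ agrees with the input-based count used by Sidki---automatic for his invertible examples, but an extra verification when importing the argument into the non-invertible setting of $\FE$.

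For strictness along the exponential scale I would invoke Proposition~\ref{prop-entraut}. To a finite-state $t$ one attaches a coaccessible \NFA whose successful runs of length $n$ are in bijection with the pairs $(\mot u,\mot v)$ contributing to $\alpha_t(n)$; after determinization its Perron--Frobenius eigenvalue realises $\snorm t$ as its logarithm. Conversely, given a target spectral radius coming from a nonnegative integer matrix one can hand-craft a Mealy automaton whose associated counting \NFA reproduces that matrix. Since, by a classical theorem of Lind, Perron numbers are dense in the admissible range, one can for any $\lambda_1<\lambda_2$ exhibit a transformation with $\snorm t\in(\lambda_1,\lambda_2]$, proving $\SE{\lambda_1}\subsetneq\SE{\lambda_2}$. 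The gap $\SE{0}\setminus\bigcup_d\SP{d}$ is filled by any transformation of intermediate activity, for example $\alpha_t(n)\sim 2^{\sqrt n}$, obtained by padding the wreath recursion of the $t_d$'s.

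The main obstacle is the exponential step: one must realise a dense set of Perron--Frobenius eigenvalues as honest activities of Mealy automata, not merely as spectral radii of abstract matrices. Forcing a prescribed characteristic polynomial while respecting the Mealy constraints---completeness, determinism, finite stateset, and the output-based rather than input-based counting baked into the definition of $\alpha_t$---is the delicate combinatorial point; once the construction is in place the density statement handles every admissible pair $(\lambda_1,\lambda_2)$ uniformly.
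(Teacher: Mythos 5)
Your overall route is the same as the paper's: the inclusions follow from monotonicity of the norms and subadditivity (Proposition~\ref{p:closed}); strictness of the polynomial levels is witnessed by a Sidki-style family $t_d$ with $\alpha_{t_d}(n)=\Theta(n^d)$ (the paper's Example~\ref{e:spol} takes $t_k=(t_{k-1},t_k)[2-(k\bmod 2),2-(k\bmod 2)]$, the alternating output map being there precisely so that the output-based count does not collapse --- the verification you correctly flag as non-automatic in the non-invertible setting); and strictness of the exponential levels comes from Proposition~\ref{prop-entraut} together with Lind's realization theorem for Perron numbers and their density in $[1,\infty)$. The paper, like you, delegates the realization of a prescribed Perron number as the honest activity of a Mealy automaton to the citation of \cite[Theorem~3]{Lin84}; you single out the same point as the delicate one without resolving it, so on that step you are at the same level of detail as the paper.

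One step of yours is genuinely wrong, although fortunately it is not needed. You propose to witness the gap between $\bigcup_d\SP{d}$ and $\SE{0}$ by a transformation of intermediate activity, e.g.\ $\alpha_t(n)\sim 2^{\sqrt n}$, ``obtained by padding the wreath recursion.'' No such finite-state transformation exists: by Lemma~\ref{lem:struct} the activity of $t$ counts the length-$n$ paths from a fixed state of the finite automaton $\detee{\outs{\auta}_t}$, i.e.\ it is a sum of entries of the $n$-th power of a nonnegative integer matrix, and such a quantity grows either polynomially or exponentially --- there is no intermediate regime. The paper states explicitly that $\SE{0}$ \emph{coincides} with $\bigcup_{d\geq-1}\SP{d}$, so the set you are trying to populate is empty. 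The strictness $\SP{d}\subsetneq\SE{0}$ that the theorem actually asserts is already delivered by your polynomial construction, since $\SP{d}\subsetneq\SP{d+1}\subseteq\SE{0}$; you should delete the intermediate-growth claim rather than try to repair it.
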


The proof of the previous result is postponed to the end of Section~\ref{s:charac} on page~\pageref{proof:hierarchy}.

\medskip Sidki defined in~\cite{Sidki00} the activity of a finite-state automorphism~\(t\in\FAut(\Sigma^*)\) as~\[\theta_t:n\longmapsto\card{\{\mot{u}\in\Sigma^n:\:\sect{t}{\mot{u}}\not=\unit\}},\] and the corresponding classes~\(\Pol(d)\). Using this notion of activity~\(\theta\) for transformations leads inevitably to an impasse: the associated classes with fixed degree polynomial \(\theta\)-activity would be not closed under composition. However it is straightforward that our new notion
of activity~\(\alpha\) coincides with Sidki's activity~\(\theta\) in the case of automorphisms.

The class~\(\SE{0}\) coincides with the infinite union~\(\bigcup_{d \geq -1} \SP{d}\), whose corresponding automorphisms class is denoted by~\(\Pol{(\infty)}\) in \cite{Sidki00}.

\section{Structural characterization of the activity norm}\label{s:charac}

From \cite{BBSZ13}, we know that the finite-state automorphisms which have polynomial activity are exactly those  whose underlying automaton does not contain entangled cycles (except on the trivial state). Moreover, the degree of the polynomial is given by the longest chain of cycles in the automaton. The first claim remains true for transformations, but things are a bit more involved for the second one (see Example~\ref{e:spol}).

\medskip To any minimal Mealy automaton~\(\auta\) with stateset~\(Q\) and alphabet~\(\Sigma\), we associate
its \emph{pruned output} automaton~\(\outs{\auta}\) defined as the \NFA\
with stateset~\(Q\smallsetminus\{\unit\}\) (all states being final) and alphabet \(\Sigma\),
and whose transitions are given, for~\(p,q\in Q\smallsetminus\{\unit\}\), by
\[p \xrightarrow{~y~} q \ \in\ \outs{\auta} \quad\Longleftrightarrow\quad p \xrightarrow{~x|y~} q \ \in\ \auta.\]
According to context, we shall identify a transformation~\(t\in\FE\), the state of~\(\auta_t\), and the corresponding state of~\(\outs{\auta}_t\).

\begin{lemma}\label{lem:struct}
The activity of a transformation~\(t \in \FE\) is the number of paths starting from~\(t\)
in the (non-complete) deterministic automaton~\(\detee{\outs{\auta}_t}\) constructed via the Rabin--Scott construction.
\end{lemma}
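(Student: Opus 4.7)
The plan is to identify $\alpha_t(n)$ with a path count in three steps, each essentially a change of viewpoint.

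First I would unpack the definition of $\alpha_t(n)$ directly from the structure of $\auta_t$. A word $\mot{v}\in\Sigma^n$ contributes to $\alpha_t(n)$ precisely when there exist an input $\mot{u}=u_1\cdots u_n$ and intermediate states $p_1,\ldots,p_n\in Q(t)$ such that
\[t\xrightarrow{u_1|v_1}p_1\xrightarrow{u_2|v_2}\cdots\xrightarrow{u_n|v_n}p_n\quad\text{in }\auta_t\]
with $p_n\neq\unit$. Here I would exploit that $\unit$ is absorbing in the minimal automaton—$\sect{\unit}{x}=\unit$ for every $x$—so the condition $p_n\neq\unit$ is equivalent to $p_i\neq\unit$ for all $i=1,\dots,n$. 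Hence $\alpha_t(n)$ counts those $\mot{v}$ arising as the output of some run in $\auta_t$ starting at $t$ and avoiding~$\unit$ throughout.

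Next I would translate this into the NFA $\outs{\auta}_t$. By construction, $p\xrightarrow{y}q$ is a transition of $\outs{\auta}_t$ exactly when $p,q\neq\unit$ and there is some input letter $x$ with $p\xrightarrow{x|y}q$ in $\auta_t$. Thus, projecting onto output labels, the runs of $\auta_t$ of length $n$ starting at $t$ and avoiding $\unit$ yield exactly the $\mot{v}$-labelled paths of length $n$ from $t$ in $\outs{\auta}_t$, and conversely any such path lifts to such a run. Therefore $\alpha_t(n)$ equals the number of words $\mot{v}\in\Sigma^n$ labelling at least one path from $t$ in the NFA $\outs{\auta}_t$.

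Finally I would invoke the Rabin--Scott construction with initial set $\{t\}$. By the subset construction, reading $\mot{v}$ from $\{t\}$ lands in a non-empty subset of $Q\smallsetminus\{\unit\}$ at every step iff $\mot{v}$ labels at least one path from $t$ in $\outs{\auta}_t$; all states of $\outs{\auta}_t$ being final, every non-empty subset is a final state of $\detee{\outs{\auta}_t}$, so no readable word is discarded in the non-complete DFA. Determinism then turns the set of such words of length $n$ into a bijection with the paths of length $n$ starting from $\{t\}$ in $\detee{\outs{\auta}_t}$, which gives exactly $\alpha_t(n)$.

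The one delicate point will be the first step: the replacement of the global condition $p_n\neq\unit$ by the local condition ``the run avoids $\unit$'' relies crucially on $\unit$ being absorbing in the \emph{minimal} Mealy automaton, and this is precisely what makes the pruning operation $\auta_t\leadsto\outs{\auta}_t$ sound for counting purposes. Once this observation is in place, the remainder of the argument is the standard equivalence between NFA-readable words and DFA-paths, specialised to the fully-accepting NFA $\outs{\auta}_t$.
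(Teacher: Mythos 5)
Your proof is correct and takes essentially the same route as the paper: both identify the output words reaching a nontrivial section with the words labelling paths in the pruned output NFA \(\outs{\auta}_t\), and then use the Rabin--Scott determinization to turn the count of such words into a count of paths. Your explicit justification that \(\sect{t}{\mot{u}}\not=\unit\) is equivalent to the run avoiding \(\unit\) (because \(\unit\) is absorbing) is a detail the paper leaves implicit, but the argument is the same.
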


\begin{proof}
Let \(t \in \FE\) with~\(\auta_t\) its associated automaton.  Let us count the words~\(\mot{v}\in\Sigma^n\) for which there is a word~\(\mot{u}\in\Sigma^n\) with~\(\sect{t}{\mot{u}}\not=\unit\) and~\(\act{t}{\mot{u}}=\mot{v}\). For~\(n=1\), \(\alpha_t(1)\) is exactly the number of different outputs from the state~\(t\) that do not lead to a trivial state of~\(\auta_t\). Now for \(\mot{v}\in \Sigma^n\), if \(\mathcal{E}\) denotes the set of those
states accessible from~\(t\) by reading~\(\mot{v}\) (this corresponds to the Rabin--Scott powerset construction) in~\(\outs{\auta}_t\),
the number of ways to extend~\(\mot{v}\) without getting into a trivial state in~\({\auta}_t\) corresponds to the number of outputs of the state~\(\mathcal{E}\) in~\(\detee{\outs{\auta}_t}\), whence the result.
\end{proof}

Whether the activity of a given~\(t\in\FE\) is polynomial or
exponential can be decided by only looking at the cycle structure
of~\(\outs{\auta}_t\). Any cycle considered throughout this paper is
simple: no repetitions of vertices or edges are allowed. Two cycles
are \emph{coreachable} if there exists a path from any of them to the
other one. A chain of cycles is a sequence of cycles such that each cycle is reachable from its predecessor.

\begin{proposition}
A transformation~\(t \in \FE\) has exponential activity if and only if it can reach two coreachable cycles with distinct labels in~\({\outs{\auta}_t}\).
\end{proposition}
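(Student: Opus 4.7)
The plan is to leverage Lemma~\ref{lem:struct}, which expresses $\alpha_t(n)$ as the number of length-$n$ paths from the initial state $\{t\}$ in the determinized automaton $\detee{\outs{\auta}_t}$, and then to characterize exponential growth of this path count via the cycle structure of the \NFA $\outs{\auta}_t$.

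For the $(\Leftarrow)$ direction, I would start from two coreachable cycles $C_1,C_2$ in $\outs{\auta}_t$, reachable from $t$, with distinct labels $u_1,u_2$ of respective lengths $\ell_1,\ell_2$. Since coreachable cycles lie in a common strongly connected component of $\outs{\auta}_t$, I fix a vertex $v\in C_1$ together with paths $P,P'$ of labels $\alpha,\beta$ and lengths $a,b$ connecting $v$ to some $v'\in C_2$ and back. I then build two closed walks at $v$ of common length $L=\ell_1\ell_2+a+b$: the walk $\pi_A$ going around $C_1$ exactly $\ell_2$ times and then following $PP'$, and the walk $\pi_B$ following $P$, going around $C_2$ exactly $\ell_1$ times, and then following $P'$. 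Their labels are $u_1^{\ell_2}\alpha\beta$ and $\alpha u_2^{\ell_1}\beta$ respectively; a Fine--Wilf-style analysis of the equation $u_1^{\ell_2}\alpha = \alpha u_2^{\ell_1}$ shows these labels must differ, since equality would force $u_1$ and $u_2$ to be cyclic rotations of a common primitive word, contradicting the distinct-labels hypothesis (some degenerate subcases, such as $\alpha$ empty or non-primitive $u_i$, are handled by replacing $C_i$ by an iterate). Once $\pi_A$ and $\pi_B$ have distinct labels of equal length, concatenating $m$ independent choices from $\{\pi_A,\pi_B\}$ produces $2^m$ distinct labels of walks at $v$ of length $mL$; prepending any path from $t$ to $v$ gives $\alpha_t(mL+d)\geq 2^m$, establishing exponential activity.

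For the converse $(\Rightarrow)$ I would argue contrapositively: if every pair of coreachable cycles reachable from $t$ in $\outs{\auta}_t$ share a common cyclic label, then $\alpha_t$ grows only polynomially. The key observation is that in each strongly connected component $S$ of $\outs{\auta}_t$ reachable from $t$, all simple cycles share a common primitive root $w_S$; combinatorics on words then forces every closed walk in $S$ to have label equal to some rotation of $w_S^k$. This rigid periodicity, transported through the Rabin--Scott powerset construction, implies that each \textsc{scc} of $\detee{\outs{\auta}_t}$ lying above $S$ reduces to a single simple cycle. A deterministic automaton whose \textsc{scc}s are all simple cycles has only polynomially many paths of each length (with degree bounded by the chain depth of its \textsc{scc}s), so the activity is polynomial, as required.

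The main obstacle is the $(\Rightarrow)$ direction: formally translating the uniform-cyclic-label hypothesis into structural constraints on the powerset construction. Specifically, one must track how the subsets of $S$ reachable under a periodic input evolve and use the word-combinatorial rigidity induced by a common cyclic label to rule out any branching in $\detee{\outs{\auta}_t}$ within each \textsc{scc}.
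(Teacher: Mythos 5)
Your $(\Leftarrow)$ strategy---manufacture two closed walks of equal length with distinct output labels at a common vertex and concatenate them freely---is essentially the paper's, but it contains a concretely false step. From the equation $u_1^{\ell_2}\alpha=\alpha u_2^{\ell_1}$ you correctly deduce that $u_1$ and $u_2$ are conjugate (rotations of powers of a common primitive word), but you then assert that this contradicts the distinct-labels hypothesis. It does not: two coreachable cycles can carry labels that are distinct yet conjugate (say $12$ and $21$), or distinct powers of a common word (say $1$ and $11$), and in those cases your walks $\pi_A$ and $\pi_B$ can have \emph{equal} labels, so the construction produces no branching --- indeed in such configurations the number of distinct outputs can stay bounded, so no amount of Fine--Wilf repair closes the gap. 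What is actually needed, and what the paper's own (admittedly terse) proof silently posits, is a single state $s$ reachable from $t$ with two return words $\mot v,\mot w$ satisfying $\sect{s}{\mot v}=s=\sect{s}{\mot w}$ whose outputs $\act{s}{\mot v}$ and $\act{s}{\mot w}$ do not commute (equivalently, are not powers of a common word); only then do the $2^m$ concatenations stay distinct and yield $\alpha_t(\ell)\geq\card{\{\act{t}{\mot x}\mid \mot x\in\mot u(\mot v+\mot w)^*\cap\Sigma^\ell\}}$ growing exponentially. You must extract that non-commutation from the hypothesis (or strengthen the hypothesis), not merely the distinctness of the cycle labels.

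For $(\Rightarrow)$ you take a genuinely different route from the paper. The paper argues by contradiction: exponential activity plus Lemma~\ref{lem:struct} yields a reachable powerset state $\mathcal E$ of $\detee{\outs{\auta}_t}$ lying on two distinct cycles (whose labels are automatically distinct by determinism), and these are then lifted back to coreachable cycles of $\outs{\auta}_t$. You instead argue contrapositively via word-combinatorial rigidity of the strongly connected components and then transport that rigidity through the Rabin--Scott construction. This is an appealing plan (and would prove slightly more, since your hypothesis of a common cyclic label is weaker than equality of labels), but its decisive step --- that the periodicity of all closed-walk labels in a component $S$ forces every strongly connected component of $\detee{\outs{\auta}_t}$ above $S$ to be a single simple cycle --- is precisely the part you flag as ``the main obstacle'' and never actually prove. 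As written, this direction is a statement of what remains to be shown rather than an argument, so the proposal is incomplete on both counts.
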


\begin{proof}
$(\Leftarrow)$ Assume that~\(t\) can reach a state~\(s \in \outs{\auta}_t\) that lies on two cycles with distinct labels.
There exist a word~\(\mot{u}\in\Sigma^*\) satisfying~\(\sect{t}{\mot{u}}=s\)
and two words~\(\mot{v},\mot{w}\in\Sigma^*\) satisfying~\(\sect{s}{\mot{v}}=s=\sect{s}{\mot{w}}\)
and~\(\act{s}{\mot{v}}\neq\act{s}{\mot{w}}\).
We obtain~\(\alpha_t(\ell) \geq \card{\{\act{t}{\mot{x}} \mid \mot{x} \in \mot{u}(\mot{\mot{v}+\mot{w}})^* \cap \Sigma^{\ell}}\}\) for~\(\ell\geq0\). Therefore \(\alpha_t\) grows exponentially.

$(\Rightarrow)$ Assume that \(t\) has exponential activity and cannot reach two coreachable cycles in~\({\outs{\auta}_t}\). By Lemma~\ref{lem:struct}, there exist a subset~\(\mathcal{E} \subset Q\) and three words~\(\mot{u,v,w} \in \Sigma^*\) such that \(\mathcal{E}\) is the set of nontrivial states accessible from~\(t\) reading~\(\mot{u}\)  and the paths labeled by \(\mot{v}\) and \(\mot{w}\) are cycles starting from~\(\mathcal{E}\) in~\({\outs{\auta}_t}\). It means that  \(\mot{v}\) and \(\mot{w}\) are also cycles starting from \(t\) in~\({{\auta_t}}\), contradiction.
\end{proof}

Using the subadditivity of the activity (see~Lemma~\ref{lem-subadditive}), we get for the polynomial activities:

\begin{corollary}
Let \(\auta\) be a Mealy automaton. The transformations of~\(\presp{\auta}\) are all of polynomial activity if and only if there are no coreachable cycles in the automaton \(\detee{\outs{\auta}}\). Moreover the degree of the (polynomial) activity corresponds to the longest chain of cycles in \(\detee{\outs{\auta}}\) minus~1.
\end{corollary}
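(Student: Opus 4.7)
The plan is to reduce the condition on every element of $\presp{\auta}$ to a condition on the finitely many generators via subadditivity (Lemma~\ref{lem-subadditive}), and then to invoke the preceding proposition together with Lemma~\ref{lem:struct}. Subadditivity yields $\pnorm{t_1 \cdots t_k} \leq \max_i \pnorm{t_i}$, so every $t \in \presp{\auta}$ has polynomial activity if and only if every state of $\auta$ does, and in that case $\sup_{t \in \presp{\auta}} \pnorm{t}$ is already attained at some generator. In particular, if $\detee{\outs{\auta}}$ has no coreachable cycles then, for every state $q$ of $\auta$, the accessible part from the singleton $\{q\}$ --- which coincides with $\detee{\outs{\auta}_q}$ --- has no coreachable cycles either, so the preceding proposition gives polynomial activity of $q$, and subadditivity extends it to all of $\presp{\auta}$.

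For the converse and for the degree assertion, I would use Lemma~\ref{lem:struct}, which identifies $\alpha_q(n)$ with the number of paths of length $n$ from $\{q\}$ in $\detee{\outs{\auta}_q}$. Assuming no coreachable cycles are present, this path count in a finite DFA grows as $\Theta(n^{d_q - 1})$, where $d_q$ is the longest chain of cycles accessible from $\{q\}$: distributing the $n$ steps among the $d_q$ cycles of a chain $C_1 \to \cdots \to C_{d_q}$ provides $\Theta(n^{d_q - 1})$ distinct paths by the choice of the $d_q - 1$ transition times, while the DAG-of-cycles structure forbids more. Hence $\pnorm{q} = d_q - 1$, and maximising over generators gives the longest chain of cycles in $\detee{\outs{\auta}}$ minus one. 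Conversely, if $\detee{\outs{\auta}}$ does contain coreachable cycles, they lie in the accessible part from some singleton $\{q\}$, so the path count from $\{q\}$ grows exponentially and $q \in \presp{\auta}$ has exponential activity.

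The subtle point is the bridge between the NFA formulation of the preceding proposition --- two coreachable cycles with distinct labels in $\outs{\auta}_t$ --- and the DFA formulation used here --- coreachable cycles in $\detee{\outs{\auta}}$. One implication is immediate, since in a DFA two distinct cycles at a common state automatically carry distinct labels; the other rests on combining Lemma~\ref{lem:struct} with the standard dichotomy polynomial-versus-exponential for length-$n$ path counts in a finite DFA, which translates the NFA criterion into a purely structural DFA criterion on $\detee{\outs{\auta}_t}$, and therefore on the relevant accessible part of $\detee{\outs{\auta}}$.
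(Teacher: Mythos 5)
Your proof is correct and follows the route the paper intends: the corollary is derived there in one line from the subadditivity of the activity (Lemma~\ref{lem-subadditive}) together with Lemma~\ref{lem:struct} and the preceding proposition, which is exactly the reduction to generators and the path-counting in \(\detee{\outs{\auta}}\) that you carry out. You merely make explicit details the paper leaves implicit (the \(\Theta(n^{d-1})\) count for a chain of \(d\) cycles and the bridge between the NFA criterion and the DFA criterion), without changing the approach.
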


\begin{example}\label{e:spol} Consider the transformation~\(t_0=(\unit,t_0)[2,2]\) from~Example~\ref{e:spol-zero}, its square~\(t_0^2\), and the associated automata~\(\auta_{t_0^2}\) and~\(\detee{\outs{\auta}_{t_0^2}}\):
\begin{center}
\begin{tikzpicture}[->,>=latex,node distance=16mm]
	\begin{scope}[]
	\node[draw,circle,minimum width=19pt,inner sep=0pt] (tt) {$t_0^2$};
	\node[draw,circle,minimum width=19pt,inner sep=0pt,right of=tt] (t) {$t_0$};
	\node[draw,circle,minimum width=19pt,inner sep=0pt,right of=t] (one) {$\unit$};
	\path	
		(tt)	edge[loop below] 		node[below]{\(2|2\)}	(tt)
		(tt)	edge				node[below]{\(1|2\)}	(t)
		(t)	edge[loop below] 		node[below]{\(2|2\)}	(t)
		(t)	edge				node[below]{\(1|2\)}	(one)
		(one)	edge[loop right]		node[right,align=center]{\(1|1\)\\\(2|2\)}	(one);
	\end{scope}
	\begin{scope}[xshift=6.5cm,node distance=20mm]
	\node[draw,rectangle,minimum height=19pt] (tt) {$\lbrace t_0^2 \rbrace $};
	\node[draw,rectangle,minimum height=19pt,right of=tt] (ttt) {$\lbrace t_0^2,t_0 \rbrace $};
	\node[draw,rectangle,minimum height=19pt,right of=ttt] (t) {$\lbrace t_0\rbrace$};
	\path	
		(tt)	edge[] 		node[below]{\(2\)}	(ttt)

		(ttt)	edge[loop below] 		node[below]{\(2\)}	(ttt)
		(t)	edge[loop below] 		node[below]{\(2\)}	(t);
	\end{scope}
\end{tikzpicture}
\end{center}
Note that, before determinization, two disjoint cycles are accessible from the state~\(t_0^2\). In the determinized version, \(\{t_0\}\) and \(\{t_0^2\}\) both access to only one cycle, and we conclude \(\{t_0,t_0^2\} \subset \SP{0}\). By Proposition~\ref{p:closed}, we actually knew the full inclusion~\(\presp{\,t_0\,}\subset\SP{0}\).

\smallbreak
Defining further \(t_k=(t_{k-1},t_k)[2-(k \mod 2), 2-(k \mod 2)]\in\FEnd(\{1,2\}^*)\), we obtain the family with~\(t_k\in\SP{k}\smallsetminus\SP{k-1}\) for~\(k>0\),
that witnesses the strictness of the polynomial part of the hierarchy from~Theorem~\ref{t:hierarchy}.
\end{example}

Using Proposition~\ref{prop-entraut}, we obtain an explicit formula for the norm~\(\snorm{\cdot}\):

\begin{proposition}
  Let \(t\) be a finite-state transformation with associated Mealy
  automaton~\(\auta_t\).  The norm~\(\snorm{t}\) is the logarithm of
  the Perron eigenvalue of the transition matrix of
  \(\detee{\outs{\auta}_t}\):
  \[\snorm{t} = \log \lambda(\detee{\outs{\auta}_t}) \:.\]
\end{proposition}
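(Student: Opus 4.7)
The plan is to chain Lemma~\ref{lem:struct} with Proposition~\ref{prop-entraut} via the notion of language entropy. First, I would view $\detee{\outs{\auta}_t}$ as a DFA whose initial state is the singleton $\{t\}$ (consistent with the Rabin--Scott construction starting from $t$) and in which every state is declared final. Since by definition every state of $\outs{\auta}_t$ is final, every nonempty subset of $Q\smallsetminus\{\unit\}$ that arises in the powerset construction is final in $\detee{\outs{\auta}_t}$, so the latter is trivially coaccessible. Let $L\subseteq\Sigma^*$ denote the language it recognizes.

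Second, because $\detee{\outs{\auta}_t}$ is deterministic and all its states are final, a word $\mot{v}\in\Sigma^n$ belongs to $L$ if and only if it labels a (necessarily unique) path of length $n$ starting at $\{t\}$. By Lemma~\ref{lem:struct}, the number of such paths is exactly $\alpha_t(n)$, so
\[\#(L\cap\Sigma^n) = \alpha_t(n).\]
Plugging this into the definition of entropy and comparing with the definition of $\snorm{\cdot}$ gives
\[h(L) = \lim_{n\to\infty}\frac{1}{n}\log\#(L\cap\Sigma^n) = \lim_{n\to\infty}\frac{\log\alpha_t(n)}{n} = \snorm{t}.\]
Proposition~\ref{prop-entraut} then identifies $h(L)$ with $\log\lambda(\detee{\outs{\auta}_t})$, and combining both equalities yields the claim.

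The only subtlety, and the step I expect to need the most care, is interpreting $\lambda(\detee{\outs{\auta}_t})$ correctly: the Perron--Frobenius eigenvalue must be that of the subautomaton reachable from the initial state $\{t\}$, since states not reachable from $\{t\}$ contribute nothing to the counts $\#(L\cap\Sigma^n)$. This is however automatic, because Proposition~\ref{prop-entraut} tacitly considers the trim (accessible and coaccessible) part of the DFA, which here is exactly the component reachable from $\{t\}$. One should also note that the limit defining $\snorm{t}$ exists in this regime: the counts $\alpha_t(n) = v A^n w$ are coefficients of a rational generating series whose dominant pole is $\lambda(\detee{\outs{\auta}_t})^{-1}$, and standard Perron--Frobenius analysis of non-negative matrices (as invoked already in Proposition~\ref{prop-entraut}) ensures convergence.
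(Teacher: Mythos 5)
Your proposal is correct and follows essentially the same route as the paper: both invoke Lemma~\ref{lem:struct} to identify $\alpha_t(n)$ with $\#(L\cap\Sigma^n)$ for the language $L$ of $\detee{\outs{\auta}_t}$ rooted at $\{t\}$ with all states final (hence coaccessible), and then apply Proposition~\ref{prop-entraut}. Your extra remarks on restricting to the accessible part and on the existence of the limit are sensible refinements of the same argument, not a different approach.
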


\begin{proof} By Lemma~\ref{lem:struct}, the activity of~\(t\) counts the number of paths in~\(\detee{\outs{\auta}_t}\).
Since all its states are final, this automaton is coaccessible and the cardinality of the language accepted when putting~\(t\) as the initial state is exactly the activity of~\(t\).
Therefore by Proposition~\ref{prop-entraut}, we have
\[\snorm{t} = \lim_{\ell \to \infty} {\log \alpha_t(\ell) \over \ell} = \lim_{\ell \to \infty} {\frac{1}{\ell}{\ \log  \sum\limits_{t'=1}^{n} (A^\ell)_{t,t'}}}  = h(L) = \log  \lambda(\detee{\outs{\auta}_t}),\]
with~\(A=(A_{i,j})_{i,j}\) the adjacency matrix of~\(\detee{\outs{\auta}_t}\).
\end{proof}

\smallskip\emph{Proof of~Theorem~\ref{t:hierarchy}.}\label{proof:hierarchy} The strictness for the polynomial part is obtained from Example~\ref{e:spol}. Now, as the norm~\(\snorm{.}\) is the logarithm of the maximal eigenvalue of a matrix with integer coefficients, the classes~\(\SE{\lambda}\) increase only when~\(e^\lambda\) is an algebraic integer that is the largest zero of its minimal polynomial, \ie, a root of a Perron number. Furthermore, each of these numbers is the norm of some finite-state transformation, see~\cite[Theorem~3]{Lin84} for a proof. It is also known that Perron numbers are dense in~\([1,\infty)\), which gives us the strictness for the exponential part: $\lambda_1<\lambda_2$ implies~$\SE{\lambda_1}\subsetneq\SE{\lambda_2}$. 

Finally, the growth rate  can be computed with any precision~\(0<\delta <1\) in time~\(\Theta\left(-\log ( \delta n) \right)\), where~\(n\) is the number of states of the automaton~\cite{Shu08b}.\hfill\qed

\begin{example} Consider the transformations~$\sss=(\ttt,\sss)[1,1]$ and~$\ttt=(\unit,\sss)$ with common associated automata~\(\auta\)
(on the left) and~\(\detee{\outs{\auta}}\) (on the right):
\begin{center}
\begin{tikzpicture}[->,>=latex,node distance=17mm]
	\begin{scope}
	\node[draw,circle,minimum width=19pt] (s) {$\sss$};
	\node[draw,circle,minimum width=19pt,right of=s] (t) {$\ttt$};
	\node[draw,circle,minimum width=19pt,right of=t] (one) {$\unit$};
	\path	(s)	edge[loop left] 		node[left]{\(2|1\)}	(s)
		(s)	edge[bend left=10] 	node[above]{\(1|1\)}	(t)
		(t)	edge[bend left=10] 	node[below]{\(2|2\)}	(s)
		(t)	edge				node[below]{\(1|1\)}	(one)
		(one)	edge[loop right]		node[right,align=center]{\(1|1\)\\\(2|2\)}	(one);
	\end{scope}
	\begin{scope}[xshift=57.5mm]
	\node[draw,rectangle,minimum height=16pt] (t) {$\lbrace \ttt\rbrace$};
	\node[draw,rectangle,minimum height=16pt,right of=t] (s) {$\lbrace \sss \rbrace $};
	\node[draw,rectangle,minimum height=16pt,right of=s] (st) {$\lbrace \sss,\ttt \rbrace $};
	\path	(st)	edge[loop right,xscale=.6] 	node[right]{\(1\)}	(s)
		(t)	edge[] 			node[above]{\(2\)}	(s)
		(s)	edge[bend left=10] 	node[above]{\(1\)}	(st)
		(st)	edge[bend left=10] 	node[below]{\(2\)}	(s);
	\end{scope}
\end{tikzpicture}
\end{center}
We find that $\alpha_\sss(n)$ and $\alpha_\ttt(n+1)$
correspond to the $n$-th Fibonacci number.
We deduce $\snorm{\sss}=\snorm{\ttt}=\log\varphi$ where~$\varphi$ is the golden ratio,
hence~$\sss,\ttt\in\SE{\log\varphi}$. 
\end{example}

\section{The orbit signalizer graph and the order problem}\label{s:order}

This section is devoted to the \OP: can one decide whether a given
element generates a finite semigroup?  The latter is known to be
undecidable for general automaton semigroups~\cite{Gil14} and
decidable for~\(\Pol(0)\)~\cite{BBSZ13}. We give a general
construction that associates a graph to a transformation of
$\Sigma^*$, and show that, if finite, this graph lets us compute the
index and period of the transformation.  We show that this graph is
finite for elements from~\(\SPol(0)\), and solve the \OP in this
manner.

\medbreak Let $\Sigma$ be an alphabet. We define the \emph{orbit signalizer graph}~$\Phi$ for~$\FEnd(\Sigma^*)$ as the following (infinite) graph. The vertices are the pairs of elements in~$\FEnd(\Sigma^*)$.
For each letter~$x\in\Sigma$, there is an arrow from the source~$(s,t)$ with label~$(\lab{x}{\dindex\,}{\dperiod})$
where~$\dindex$ and~$\dperiod$ are the minimal integers (with~\(\dperiod>0\)) satisfying\[\act{st^{\dindex+\dperiod}}{x}=\act{st^{\dindex}}{x},\]
and with target~\((\sect{r}{x},\sect{t^\dperiod}{\act{r}{x}})\) for~\(r=st^{\dindex}\).
The parameters~$\dindex$ and~$\dperiod$ correspond respectively to the \emph{index}
and to the~\emph{period} of the orbit of~$x$ under the action of~$st^\omega$, see~Figure~\ref{fig-frypan}. 

In what follows, the intuition is roughly to generalize Figure~\ref{fig-frypan}, by considering a path~\(\pi\) instead of the letter~\(x\): such a construction leads also to a pan graph, whose handle has length between~\(i^-_{t}\) and~\(i^+_{t}\), and whose cycle has length~\(p_t\). The main challenge here is to be able to keep the construction finite, when possible.

\begin{figure}[ht]
\vspace*{-12pt}
\begin{center}
\begin{tikzpicture}[->,>=latex,node distance=12mm]
\begin{scope}[inner sep= 2.5pt]
	\node (Y0) {$x^r_{\phantom{5}}$};
	\node (Z) [node distance=13mm,right of=Y0] {};
	\begin{scope}[line width=.8pt,rounded corners=3pt,densely dotted]
		\draw[fill=gray!40,opacity=.2] (-6.7,-1.2) rectangle (-.42,-.49);
		\draw[fill=gray!40,opacity=.2] (Z) ++(-170:1.1)
     			arc(-170:170:1.1)      -- ++(170:-0.6)
     			arc(170:-170:0.5)    -- cycle;
	\end{scope}
	\node (t0) [node distance=11mm,above of=Y0] {};
	\node (Y1) [node distance=6mm,right of=t0] {$\act{rt}{x}$};
	\node (Y2) [node distance=14mm,right of=Y1]{$\act{rt^2}{x}$};
	\node (t1) [node distance=11mm,below of=Y2] {};
	\node (Y3) [node distance=6mm,right of=t1] {$\act{rt^3}{x}$};
	\node (Y4) [node distance=11mm,below of=t1,inner sep=.4pt] {$\act{rt^4}{x}$};
	\node (Y5) [node distance=14mm,left of=Y4,inner sep=.3pt] {$\act{rt^{\dperiod}}{x}$};
	\node (X4) [node distance=22mm,left of=Y0] {$\act{st^{\dindex-1}}{x}_{\phantom{5}}$};
	\node (X3) [node distance=18mm,left of=X4] {$\act{st}{x}_{\phantom{5}}$};
	\node (X2) [node distance=15mm,left of=X3] {$\act{s}{x}_{\phantom{5}}$};
	\node (X1) [node distance=15mm,left of=X2] {$\act{\phantom{r}}{x}_{\phantom{5}}$};
	\node (T1) [node distance=10mm,above right of=X1] {$s$};
	\node (A1) [node distance=15mm,below of=T1,gray] {$\sect{s}{x}$};
	\node (T2) [node distance=10mm,above right of=X2] {$t$};
	\node (A2) [node distance=15mm,below of=T2,gray] {$\sect{t}{\act{s}{x}}$};
	\node (T4) [node distance=38mm,right of=T2] {$t$};
	\node (A4) [node distance=38mm,right of=A2,gray,scale=.85] {$\sect{t}{\act{{st^{\dindex-1}}}{x}_{\phantom{5}}}$};
	\node (T6) [node distance=17mm,above of=Z] {$t$};
	\node (A6) [node distance=9mm,below of=T6,inner sep=.4pt] {};
	\node (T5) [node distance=7mm,right of=T4] {$t$};
	\node (A5) [node distance=8mm,below of=Y1] {};
	\node (T7aux) [node distance=32mm,right of=T5] {};
	\node (T7) [node distance=2mm,above left of=T7aux] {$t$};
	\node (A7) [node distance=14mm,right of=A5] {$$};
	\node (T8) [node distance=15mm,below of=T7,inner sep=.8pt] {$t$};
	\node (A8) [node distance=6mm,below of=A7] {$$};
	\node (T9aux) [node distance=16mm,below of=T5] {};
	\node (T9) [node distance=2mm,right of=T9aux,inner sep=.8pt] {$t$};
	\node (A9) [node distance=7mm,below of=A5] {$ $};
	\node (T3) [node distance=5mm,above right of=X3] {.};
	\node (R) [node distance=10mm,above of=T3,darkgray] {$r=st^{\dindex}$};
	\path[thick]
		(T1)	edge	[gray]					(A1)
		(T2)	edge	[gray]					(A2)
		(T4)	edge	[gray]					(A4)
		(T5)	edge	[gray]					(A5)
		(T6)	edge	[gray]					(A6)
		(T7)	edge	[gray]					(A7)
		(T8)	edge	[gray]					(A8)
		(T9)	edge	[gray]					(A9)
		(X1)	edge							(X2)
		(X2)	edge							(X3)
		(X3)	edge	[dotted]					(X4)
		(X4)	edge							(Y0)
		(Y0)	edge[bend left=20]				(Y1)
		(Y1)	edge[bend left=20]				(Y2)
		(Y2)	edge[bend left=20]				(Y3)
		(Y3)	edge[bend left=20]				(Y4)
		(Y4)	edge[bend left=20, dotted]			(Y5)
		(Y5)	edge[bend left=20]				(Y0);
	\begin{scope}[-]
		\draw[darkgray,decorate,decoration={brace,amplitude=5pt,aspect=.45,angle=20}] (T1.north west) -- (T4.north east);
		\curlybrace[color=darkgray,xshift=18pt]{-150}{160}{1.95} \node[darkgray,anchor=west] at (curlybracetipn) {$t^{\dperiod}_{\phantom{5}}$};
	\end{scope}
\end{scope}
\end{tikzpicture}
\end{center}
\vspace*{-12pt}
\caption{The cross-diagram associated with the orbit of some letter~$x\in\Sigma$ under the action of~$st^\omega$. The index~\(\dindex\) and period~\(\dperiod\) will complete the label of the \(x\)-arrow away from the vertex~\((s,t)\) in the graph~\(\Phi\). Each of the two gray zones indicates an entry of the corresponding target vertex~\((\sect{r}{x},\sect{t^\dperiod}{\act{r}{x}})\) with~\(r=st^{\dindex}\).}
\label{fig-frypan}
\end{figure}

The \emph{inf-index-cost}, \emph{sup-index-cost}, and the \emph{period-cost} of a given walk~\(\pi\) on~$\Phi$
\[\pi:(s,t)\xrightarrow{\lab{x_1}{\,\dindex_1}{\dperiod_1}}\cdots\xrightarrow{\lab{x_{|\pi|}}{\,\dindex_{|\pi|}}{\dperiod_{|\pi|}}}(s',t')\]
are respectively defined by
\[i^-(\pi)=\sum_{1\leq k\leq |\pi|}\!\!\left((1-\delta_{m_k,0})\left( (\dindex_{k}-1) \left(\prod_{1\leq j<k} \!\! \dperiod_j\right)+1\right)\right),\]
\[i^+(\pi)=\sum_{1\leq k\leq |\pi|}\!\! \left(\prod_{1\leq j<k} \!\! \dperiod_j\right)\dindex_{k},
\hbox{\qquad and\qquad} p(\pi)=\!\! \prod_{1\leq i\leq |\pi|}\!\!  \dperiod_i\ .\]

\medbreak For any~\(t\in\FEnd(\Sigma^*)\), we define the \emph{orbit signalizer graph}~\(\Phi(t)\) as the subgraph
of~\(\Phi\) accessible from the source vertex~\((\unit,t)\). 
The \emph{inf-index-cost}, \emph{sup-index-cost}, and the \emph{period-cost} of~$t\in\FEnd(\Sigma^*)$
are then respectively defined by
\[i^-_{t}=\max_{\pi\hbox{\scriptsize~on~}\Phi(t)}i^-(\pi),
\qquad
i^+_{t}=\max_{\pi\hbox{\scriptsize~on~}\Phi(t)}i^+(\pi),
\hbox{\qquad and\qquad}
p_t=\underset{\pi\hbox{\scriptsize~on~}\Phi(t)}{\lcm}p(\pi)\ .\]

\begin{proposition}\label{prop-eitheror} The semigroup generated by an element~\(t\in\FEnd(\Sigma^*)\)
is finite if and only if its index-costs~\(i^\pm_{t}\) and its period-cost~\(p_t\) are finite.
In that case, we have~\(\presp{\,t\,}=\presp{\,t:t^{i_t}=t^{i_t+p_t}\,}\) for some index~\(i^{\phantom{\pm}\!\!\!}_{t}\)
with~\(i^{-}_{t}\leq i^{\phantom{\pm}\!\!\!}_{t}\leq i^{+}_{t}\).
\end{proposition}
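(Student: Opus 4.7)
The core of the proof will be an inductive \emph{orbit-periodicity lemma} relating walks in $\Phi$ to the behavior of iterates of transformations on finite words: for every vertex $(s,t')$ of $\Phi$ and every walk
\[
\pi:(s,t')\xrightarrow{\lab{x_1}{m_1}{\ell_1}}(s_1,t_1)\to\cdots\to(s_n,t_n)
\]
reading $\mathbf{u}=x_1\cdots x_n$, and every $k\geq i^+(\pi)$, one has $\act{s(t')^{k+p(\pi)}}{\mathbf{u}}=\act{s(t')^k}{\mathbf{u}}$. I would prove this by induction on $n$. The step splits $\act{s(t')^k}{\mathbf{u}}=\act{s(t')^k}{x_1}\cdot\act{\sect{s(t')^k}{x_1}}{x_2\cdots x_n}$ and uses the cross-diagram of Figure~\ref{fig-frypan}: the first factor is $\ell_1$-periodic in $k$ once $k\geq m_1$ by the very definition of the edge, and writing $k=m_1+j\ell_1+i$ with $0\leq i<\ell_1$ the section decomposes as $\sect{s(t')^k}{x_1}=(s_1t_1^j)\cdot\sect{(t')^i}{y_1}$ with $y_1=\act{s(t')^{m_1}}{x_1}$, reducing the second factor to the inductive claim at $(s_1,t_1)$ along the subwalk $\pi'$. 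The defining identities $p(\pi)=\ell_1\,p(\pi')$ and $i^+(\pi)=m_1+\ell_1\,i^+(\pi')$ give exactly the bounds required.

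\textbf{Sufficiency (finite costs $\Rightarrow$ finite semigroup).} Apply the lemma at $(\unit,t)$: every $\mathbf{u}\in\Sigma^*$ determines a walk $\pi_\mathbf{u}$ in $\Phi(t)$ with $\act{t^{k+p(\pi_\mathbf{u})}}{\mathbf{u}}=\act{t^k}{\mathbf{u}}$ for $k\geq i^+(\pi_\mathbf{u})$. If the costs are finite, choosing $k=i^+_t$ (an upper bound for every $i^+(\pi_\mathbf{u})$) and noting $p_t$ is a common multiple of every $p(\pi_\mathbf{u})$ yields $t^{i^+_t+p_t}=t^{i^+_t}$ as transformations, so $\presp{\,t\,}$ is finite with some index $i_t\leq i^+_t$ and period dividing $p_t$. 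A parallel induction on $n$ will establish the structural identity
\[
s_n=\sect{t^{i^+(\pi)}}{\mathbf{u}},\qquad t_n=\sect{t^{p(\pi)}}{z},\qquad z=\act{t^{i^+(\pi)}}{\mathbf{u}}\text{ fixed by }t^{p(\pi)},
\]
which feeds back into the analysis: an elementary divisibility check ($\ell_k$ divides the residual period of $t$ after $\prod_{j<k}\ell_j$ iterations) shows $p(\pi)$ divides the true period $P$, hence $p_t\mid P$; together with the reverse divisibility $P\mid p_t$ from sufficiency we deduce $p_t=P$. The lower bound $i^-_t\leq i_t$ is obtained by inspecting a walk $\pi^\star$ that realizes $i^-_t$: its first edge with $m_k>0$ exhibits a word whose orbit under $t$ genuinely requires at least $i^-_t$ iterations to enter its periodic part.

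\textbf{Necessity (finite semigroup $\Rightarrow$ finite costs).} The structural identity implies that if $\presp{\,t\,}$ is finite with index $I$ and period $P$, the coordinates $s_n,t_n$ lie inside the finite set of sections of $\{t,t^2,\dots,t^{I+P-1}\}$, so $\Phi(t)$ has finitely many vertices, and every edge label $(x:m,\ell)$ satisfies $m\leq I$ and $\ell\mid P$. The main obstacle I expect is to rule out cycles in $\Phi(t)$ whose pumping makes the costs diverge: a cycle carrying an edge with $\ell>1$ or $m>0$ must be shown impossible. The argument I have in mind uses the structural identity on the pumped walks $\pi_0\pi_{\mathrm{cyc}}^N$: since the cycle closes, the vertex $t_n=\sect{t^{p(\pi_0)p(\pi_{\mathrm{cyc}})^N}}{z_N}$ must remain constant, while the exponent $p(\pi_0)p(\pi_{\mathrm{cyc}})^N$ evolves modulo $P$; matching these against the first-coordinate identity $s_n=\sect{t^{i^+(\pi_0\pi_{\mathrm{cyc}}^N)}}{\mathbf{u}_N}$ and using minimality of the edge labels, one extracts $p(\pi_{\mathrm{cyc}})=1$ and then, analogously, that every $m_k$ on the cycle vanishes. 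Once cycles are forced to be trivially labeled, $i^\pm(\pi)$ and $p(\pi)$ are uniformly bounded on the finite graph $\Phi(t)$, giving finite costs and the bracketing $i^-_t\leq i_t\leq i^+_t$.
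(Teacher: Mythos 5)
Your overall architecture rests on the same computation as the paper's proof: the cross-diagram of Figure~\ref{fig-frypan}, which splits the orbit of $x_1x_2\cdots x_n$ under $s(t')^\omega$ into the orbit of $x_1$ (index $m_1$, period $\ell_1$) and the orbit of the remaining suffix under the successor pair, followed by an induction on the length of the word/walk. The difference is one of organization. The paper states a vertex-local recursion relating the true index and period $(i_0,p_0)$ of a pair $(s_0,t_0)$ (minimal with $s_0t_0^{i_0}=s_0t_0^{i_0+p_0}$) to those of its $\card{\Sigma}$ successors, namely $\max_k\big(m_k+\max(0,\ell_k(i_k-1)+1)\big)\le i_0\le\max_k\left(m_k+\ell_k i_k\right)$ and $p_0=\mathrm{lcm}_k\,\ell_k p_k$, and then unfolds it by induction on path length; you unfold from the start, working with the walk costs $i^+(\pi)$, $p(\pi)$ and the action on the specific word read by $\pi$. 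Your orbit-periodicity lemma is correct as stated (the residual factor $\sect{(t')^i}{y_1}$ you isolate is harmless because it is applied to two arguments that the inductive hypothesis makes equal, and $k\ge i^+(\pi)=m_1+\ell_1 i^+(\pi')$ does force $j\ge i^+(\pi')$), and your structural identity $s_n=\sect{t^{i^+(\pi)}}{\mathbf u}$, $t_n=\sect{t^{p(\pi)}}{z}$ is a clean route to necessity --- indeed a more direct one than your cycle-pumping detour: it bounds each outgoing label by $\ell\, p(\pi)\mid\mathrm{lcm}(p(\pi),P)$ and $m\,p(\pi)\le\max(0,I-i^+(\pi))+p(\pi)$, whence $p(\pi)\mid P$ and $i^+(\pi)\le I+P$ uniformly over all walks, with no need to first classify which cycles can occur.

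The one genuine gap is the lower bound $i^-_t\le i_t$. The quantity $i^-(\pi)$ is a \emph{sum} of contributions $(m_k-1)\prod_{j<k}\ell_j+1$ over all edges of $\pi$ carrying $m_k>0$, so ``inspecting the first edge with $m_k>0$'' can only certify that single summand, not the total. What is needed is the cumulative counterpart of your orbit-periodicity lemma: an induction showing that the orbit of the word $\mathbf u$ read by $\pi$ under $j\mapsto\act{t^j}{\mathbf u}$ has index at least $i^-(\pi)$, because each later edge with $m_k>0$ forces at least $m_k-1$ further full turns of the period $\prod_{j<k}\ell_j$ already accumulated on the length-$(k-1)$ prefix, plus one more step, before the orbit of the length-$k$ prefix closes up. This is precisely the left-hand inequality of the paper's recursion; without it your bracketing $i^-_t\le i_t\le i^+_t$ is established only on the right.
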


\begin{proof} Let~\(\Sigma=\{x_1,\ldots,x_{|\Sigma|}\}\). Let~\((s_0,t_0)\) be a vertex in~\(\Phi\) and~\((s_k,t_k)\) its successor vertex with arrow~\(\lab{x_k}{m_k}{\ell_k}\) for~\(1\leq k\leq|\Sigma|\).
\begin{center}
\begin{tikzpicture}[->,>=latex]
\begin{scope}[rounded corners=3pt,node distance=6mm,minimum width=12mm]
	\node[draw,rectangle,fill=gray!15] (X0) {\((s_0,t_0)\)};
	\node[node distance=50mm,right of=X0] (Xk) {};
	\node[draw,rectangle,fill=gray!15,above of=Xk] (X1) {\((s_1,t_1)\)};
	\node[draw,rectangle,fill=gray!15,below of=Xk] (XX) {\((s_{|\Sigma|},t_{|\Sigma|})\)};
	\node[node distance=15mm,left of=X0] (X0ip) {\((i_0,p_0)\)};
	\node[node distance=15mm,right of=X1] (X0ip) {\((i_1,p_1)\)};
	\node[node distance=18mm,right of=XX] (X0ip) {\((i_{|\Sigma|},p_{|\Sigma|})\)};
	\path	(X0)	edge		node[above left,	align=center,scale=.75]{\(\lab{x_1}{m_1}{\ell_1}\)}					(X1);
	\path	(X0)	edge		node[below left,	align=center,scale=.75,pos=.6]{\(\lab{x_{|\Sigma|}}{m_{|\Sigma|}}{\ell_{|\Sigma|}}\)}	(XX);
	\path	(X1.south)	 edge[-,dotted,thick,bend left]	(XX.north);
\end{scope}
\end{tikzpicture}
\end{center}
For~\(0\leq k\leq|\Sigma|\),
let~\((i_k,p_k)\in\{\omega,0,1,2,\ldots\}\times\{\omega,1,2,3\ldots\}\)
denote the possible minimal pair of ordinals
(with~\(p_k>0\)) satisfying\[s_kt_k^{i_k}=s_kt_k^{i_k+p_k}.\] Whenever
there is at least one successor with~\((i_k,p_k)=(\omega,\omega)\),
\((s_0,t_0)\) satisfies also~\((i_0,p_0)=(\omega,\omega)\), and so
does any of its predecessors.  Otherwise, we claim
\[\max_{1\leq k\leq |\Sigma|}\big(m_k+\max(0,\ell_k(i_k -1)+1)\big)\leq i_0\leq \max_{1\leq k\leq |\Sigma|}\left(m_k+\ell_k i_k\right)\]
and\[p_0=\underset{1\leq k\leq |\Sigma|}{\lcm}\ell_k p_k.\]
Indeed,  for~\(1\leq k\leq |\Sigma|\) and for any~\(u\in\Sigma^*\),
we have\[y_kv=\act{s_0t_0^{\dindex_k+\dperiod_ki_k}}{(x_ku)}=\act{t_0^{\dperiod_kp_k}}{(y_kv)}\]
with~\(y_k=\act{s_0t_0^{\dindex_k}}{x_k}\) and~\(v=\act{s_kt_k^{i_k+p_k}}{u}\),
as illustrated by the
cross-diagram:
\begin{center}
\begin{tikzpicture}[->,>=latex]
\begin{scope}[inner sep= 2.5pt]
	\node[gray] (X5) {$y_{k}$};
	\node (X4) [node distance=12mm,left of=X5,gray] {$\,\cdot\,$};
	\node (X3) [node distance=8mm,left of=X4,gray] {$\,\cdot\,$};
	\node (X2) [node distance=12mm,left of=X3] {$y_{k}$};
	\node (X1) [node distance=12mm,left of=X2] {$y_{k}$};
	\node (X6) [node distance=12mm,right of=X5,gray] {$\,\cdot\,$};
	\node (X7) [node distance=8mm,right of=X6,gray] {$\,\cdot\,$};
	\node (X8) [node distance=12mm,right of=X7,gray] {$y_{k}$};
	\node (X0) [node distance=18mm,left of=X1] {$x_{k}$};
	\node (T1) [node distance=8mm,above right of=X1] {$t_0^{\dperiod_k}$};
	\node (A1) [node distance=12mm,below of=T1] {$t_k$};
	\node (T2) [node distance=8mm,above right of=X2,gray] {$t_0^{\dperiod_k}$};
	\node (A2) [node distance=12mm,below of=T2,gray] {$t_k$};
	\node (T3) [node distance=8mm,above right of=X3] {};
	\node (A3) [node distance=12mm,below of=T3] {};
	\node (T4) [node distance=8mm,above right of=X4,gray] {$t_0^{\dperiod_k}$};
	\node (A4) [node distance=12mm,below of=T4,gray] {$t_k$};
	\node (T5) [node distance=8mm,above right of=X5,gray] {$t_0^{\dperiod_k}$};
	\node (A5) [node distance=12mm,below of=T5,gray] {$t_k$};
	\node (T6) [node distance=8mm,above right of=X6] {};
	\node (A6) [node distance=12mm,below of=T6,gray] {};
	\node (T7) [node distance=8mm,above right of=X7,gray] {$t_0^{\dperiod_k}$};
	\node (A7) [node distance=12mm,below of=T7,gray] {$t_k$};
	\node (T0) [node distance=8mm,above right of=X0,xshift=3mm] {$s_0t_0^{\dindex_k}$};
	\node (A0) [node distance=12mm,below of=T0] {$s_k$};
	\node (XX5) [node distance=12mm,below of=X5,gray]{$v$};
	\node (XX4) [node distance=12mm,left of=XX5,gray] {$\,\cdot\,$};
	\node (XX3) [node distance=8mm,left of=XX4,gray] {$\,\cdot\,$};
	\node (XX2) [node distance=12mm,left of=XX3,gray] {$\,\cdot\,$};
	\node (XX1) [node distance=12mm,left of=XX2,gray] {$\,\cdot\,$};
	\node (XX6) [node distance=12mm,right of=XX5,gray] {$\,\cdot\,$};
	\node (XX7) [node distance=8mm,right of=XX6,gray] {$\,\cdot\,$};
	\node (XX8) [node distance=12mm,right of=XX7,gray] {$v$};
	\node (XX0) [node distance=18mm,left of=XX1,gray] {$u$};
	\node (AA1) [node distance=12mm,below of=A1,gray] {$\,\cdot\,$};
	\node (AA2) [node distance=12mm,below of=A2,gray] {$\,\cdot\,$};
	\node (AA3) [node distance=12mm,below of=A3] {$$};
	\node (AA4) [node distance=12mm,below of=A4,gray] {$\,\cdot\,$};
	\node (AA5) [node distance=12mm,below of=A5,gray] {$\,\cdot\,$};
	\node (AA6) [node distance=8mm,below of=A6] {$$};
	\node (AA7) [node distance=12mm,below of=A7,gray] {$\,\cdot\,$};
	\node (AA0) [node distance=12mm,below of=A0,gray] {$\,\cdot\,$};
	\node (R) [node distance=10mm,above of=T3,gray,yshift=-7pt,xshift=-22pt] {$i_k$};
	\node (R2) [node distance=10mm,above of=T6,gray,yshift=-7pt,xshift=-5pt] {$p_k$};
	\path[thick]
		(T0)	edge							(A0)
		(T1)	edge							(A1)
		(T2)	edge	[gray]					(A2)
		(T4)	edge	[gray]					(A4)
		(T5)	edge	[gray]					(A5)
		(T7)	edge	[gray]					(A7)
		(X0)	edge							(X1)
		(X1)	edge							(X2)
		(X2)	edge	[gray]					(X3)
		(X3)	edge	[-,densely dotted,gray]		(X4)
		(X4)	edge	[gray]					(X5)
		(X5)	edge	[gray]					(X6)
		(X6)	edge	[-,densely dotted,gray]		(X7)
		(X7)	edge	[gray]					(X8)
		(A0)	edge	[gray]					(AA0)
		(A1)	edge	[gray]					(AA1)
		(A2)	edge	[gray]					(AA2)
		(A4)	edge	[gray]					(AA4)
		(A5)	edge	[gray]					(AA5)
		(A7)	edge	[gray]					(AA7)
		(XX0)	edge	[gray]					(XX1)
		(XX1)	edge	[gray]					(XX2)
		(XX2)	edge	[gray]					(XX3)
		(XX3)	edge	[-,densely dotted,gray]		(XX4)
		(XX4)	edge	[gray]					(XX5)
		(XX5)	edge	[gray]					(XX6)
		(XX6)	edge	[-,densely dotted,gray]		(XX7)
		(XX7)	edge	[gray]					(XX8)
;
	\begin{scope}[-]
		\draw[gray,decorate,decoration={brace,amplitude=5pt,aspect=.5,angle=20}] (T1.north west) -- (T4.north east);
		\draw[gray,decorate,decoration={brace,amplitude=5pt,aspect=.5,angle=20}] (T5.north west) -- (T7.north east);
	\end{scope}
\end{scope}
\end{tikzpicture}
\end{center}
We conclude using an induction on the length of the paths.
\end{proof}


\begin{theorem}
The \OP\ is decidable for any~\(t \in \FE\) with a finite orbit signalizer graph~\(\Phi(t)\).
\end{theorem}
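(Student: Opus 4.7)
The plan is to use Proposition~\ref{prop-eitheror} to reduce the \OP\ to a purely graph-theoretic analysis of the finite object~\(\Phi(t)\). I would proceed in three stages: first, effectively build~\(\Phi(t)\); second, decide from its cycle structure whether the costs~\(i^\pm_{t}\) and~\(p_t\) are all finite; third, in the affirmative case, extract the exact index and period of~\(\presp{\,t\,}\).

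Building \(\Phi(t)\) is algorithmic. Starting from \((\unit,t)\), I would explore in breadth-first fashion. From any reached vertex \((s,t')\), each edge label \(\lab{x}{m}{\ell}\) is computable by a bounded search, since the sequence \((\act{s{t'}^n}{x})_{n\geq 0}\) takes values in the finite alphabet \(\Sigma\) and is therefore eventually periodic with index and period at most \(\#\Sigma\); the target vertex \((\sect{r}{x},\sect{{t'}^\ell}{\act{r}{x}})\), with \(r=s{t'}^m\), is a pair of finite-state transformations produced by standard Mealy-automata operations. Equality of finite-state transformations is decidable through minimization (this is the \WP), so a newly produced vertex can always be checked against the previously visited ones. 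By assumption \(\Phi(t)\) is finite, so this exploration terminates.

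Once \(\Phi(t)\) is constructed, the finiteness of the three costs depends only on its cycle structure. Since every \(\ell_k\geq 1\), the product \(\prod_{j}\ell_j\) entering the definitions of \(i^\pm(\pi)\) and \(p(\pi)\) is non-decreasing along any extension of a walk. Any cycle containing an edge with \(\ell_k>1\) can be looped to drive \(p(\pi)\) to infinity, and any cycle edge with \(m_k>0\) forces \(i^-(\pi)\) (and hence \(i^+(\pi)\)) to grow without bound. Conversely, if every cycle edge of \(\Phi(t)\) carries the trivial label \((m,\ell)=(0,1)\), then looping contributes nothing to the costs, so the three maxima/lcm are attained on the finitely many simple paths of the graph and are effectively computable. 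By Proposition~\ref{prop-eitheror}, this dichotomy settles whether \(\presp{\,t\,}\) is finite: in the infinite-cost case I report that \(t\) has infinite order; otherwise, I proceed with the explicit numeric values of \(i^-_{t}, i^+_{t}, p_t\).

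It then remains to pinpoint the minimal pair \((i,p)\) with \(t^i=t^{i+p}\). By Proposition~\ref{prop-eitheror}, there exists some \(i_t\in[i^-_{t},i^+_{t}]\) with \(t^{i_t}=t^{i_t+p_t}\), and the true minimal period must divide~\(p_t\). I would test, for \(r=i^-_{t}, i^-_{t}+1,\ldots, i^+_{t}\) and for each divisor \(d\) of \(p_t\) in increasing order, the equality \(t^r=t^{r+d}\) via the \WP, and keep the smallest successful pair. I expect the main obstacle to lie in the cycle analysis of the second stage: the characterization of when all three costs are simultaneously finite has to be matched exactly to the hypotheses of Proposition~\ref{prop-eitheror}, which requires a careful unpacking of how \(i^-,i^+\) and \(p\) interact through the product \(\prod_j\ell_j\) under repeated cycle traversals.
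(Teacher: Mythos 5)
Your proposal is correct and follows essentially the same route as the paper: both reduce the problem via Proposition~\ref{prop-eitheror} to inspecting the cycles of the finite graph \(\Phi(t)\) and declaring infinite order exactly when some cycle carries a non-trivial label (\(m_k>0\) or \(\ell_k>1\)), which is precisely the paper's criterion \(i^-(\kappa)>0\) or \(p(\kappa)>1\) on simple cycles. You additionally spell out the effective construction of \(\Phi(t)\) and the extraction of the exact index and period via the \WP, details the paper leaves implicit.
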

\begin{proof}
Since \(\Phi(t)\) is a graph with outdegree~\(\size{\Sigma}>0\) by construction, its finiteness implies the existence of cycles.
Consider the simple cycles (there is only a finite number of these).
One can compute the index-costs~\(i^-(\kappa) \) and~\(i^+(\kappa)\) and the period-cost~\(p(\kappa)\) of each such cycle~\(\kappa\).
Whenever \(i^-(\kappa)>0 \) or \(p(\kappa)>1\) for some cycle~\(\kappa\), then \(t\) has infinite order, and finite order otherwise. 
\end{proof}

\begin{example}\label{e:semST}  The transformations~$s=(s,\unit)[2,2]$ and~$t_0=(\unit,t_0)[2,2]$ (on the left)
admit respective graphs~$\Phi(s)$ and~$\Phi(t_0)$ (on the right):
\begin{center}
\begin{tikzpicture}[->,>=latex,node distance=6mm,minimum size=6mm]
	\begin{scope}
	\node[draw,circle,minimum width=19pt,inner sep=0pt] (s) {$s$};
	\node[below of=s] (aux) {};
	\node[draw,circle,minimum width=19pt,below of=aux,inner sep=0pt] (t) {$t_0$};
	\node[draw,circle,minimum width=19pt,node distance=20mm,right of=aux,inner sep=0pt] (one) {$\unit$};
	\path	(s)	edge[loop left] 		node[left]{\(1|2\)}	(s)
		(s)	edge				node[above]{\(2|2\)}	(one)
		(t)	edge[loop left] 		node[left]{\(2|2\)}	(t)
		(t)	edge				node[below]{\(1|2\)}	(one)
		(one)	edge[loop right]		node[right,align=center]{\(1|1\)\\\(2|2\)}	(one);
	\end{scope}
	\begin{scope}[xshift=6cm,rounded corners=3pt,node distance=23mm]
		\node[draw,rectangle,fill=gray!15] (1s) {$(\unit,s)$};
		\node[draw,rectangle,right of=1s,fill=gray!15] (11) {$(\unit,\unit)$};
		\node[draw,rectangle,node distance=12mm,below of=1s,fill=gray!15] (s1) {$(s,\unit)$};
		\node[draw,rectangle,right of=s1,fill=gray!15] (1t) {$(\unit,t_0)$};
		\path	(1s)	edge 			node[above,align=center,scale=.75]{\(\labb{2}{0}{1}\)}			(11)
			(1s)	edge 			node[left,align=center,scale=.75]{\(\labb{1}{1}{1}\)}			(s1)
			(s1)	edge[bend left=15] 	node[right=.1,pos=.15,align=center,scale=.75]{\(\labb{2}{0}{1}\)}	(11)
			(s1)	edge[loop left] 		node[left,align=center,scale=.75]{\(\labb{1}{0}{1}\)}			(s1)
			(11)	edge[loop right] 	node[right,align=center,scale=.75]{\(\labb{1}{0}{1}\)\\\(\labb{2}{0}{1}\)}	(11)
			(1t)	edge[loop right] 	node[right,align=center,scale=.75]{\(\labb{1}{1}{1}\)\\\(\labb{2}{0}{1}\)}	(1t);
	\end{scope}
\end{tikzpicture}
\end{center}
According to Proposition~\ref{prop-eitheror}, they generate the finite monoid~$\langle~s:s^2=s~\rangle_+$
and the free monoid~$\langle~t_0:~\ ~\rangle_+$.
\end{example}

\begin{example}\label{e:sem393} 
The transformation~\(b=(a,\unit,b)[2,3,1]\) from~\(\SPol(1)\smallsetminus\SPol(0)\)
with~$a=(\unit,\unit,a)[1,1,2]$ admits the finite graph~$\Phi(b)$ displayed on Figure~\ref{f:sem393},
in which we can read that both~$ab$ and~$ba$ have period~$1$, and that $b$ has thus period~$p_b=3$.
According to Proposition~\ref{prop-eitheror} again, the index of~$b$ satisfies~$4\leq i_b\leq 9$,
and can be explicitly computed as~$i_b=8$.
\end{example}

\begin{figure}
\begin{center}
\begin{tikzpicture}[->,>=latex,node distance=20mm,inner sep=0,minimum size=7mm]
	\begin{scope}
	\node[draw,circle,minimum width=19pt] (a) {$a$};
	\node[draw,circle,minimum width=19pt,below of=a] (b) {$b$};
	\node[draw,circle,minimum width=19pt,right of=a] (one) {$\unit$};
	\node[draw,circle,minimum width=19pt,left of=a,gray!85] (ab) {$ab$};
	\node[draw,circle,minimum width=19pt,left of=b,gray!85] (ba) {$ba$};
	\path	(a)	edge[loop above] 		node[above]{\(3|2\)}	(a)
		(a)	edge					node[align=center,above]{\(1|1\)\\\(2|1\)}	(one)
		(b)	edge[loop below] 		node[below]{\(3|1\)}	(b)
		(b)	edge				node[right]{\(1|2\)}	(a)
		(b)	edge				node[below right,pos=.4]{\(2|3\)}	(one)
		(one)	edge[loop right]		node[right,align=center]{\(1|1\)\\\(2|2\)\\\(3|3\)}	(one)
		(ab)	edge[gray!85]		node[above,align=center]{\(1|2\)\\\(2|2\)\\\(3|3\)}	(a)
		(ba)	edge[gray!85]		node[above left,pos=.35,align=center]{\(1|1\)\\\(2|2\)}	(a)
		(ba)	edge[gray!85]		node[below,align=center]{\(3|1\)}	(b);
	\end{scope}
	\begin{scope}[xshift=60mm,yshift=10mm,rounded corners=3pt,node distance=21mm,minimum size=6mm,minimum width=11mm]
		\node[draw,rectangle,fill=gray!15] (1b) {$(\unit,b)$};
		\node[draw,rectangle,fill=gray!15,below left of=1b] (1ab) {$(\unit,ab)$};
		\node[draw,rectangle,fill=gray!15,below right of=1b] (1ba) {$(\unit,ba)$};
		\node[draw,rectangle,fill=gray!15,below left of=1ab] (aa) {$(a,a)$};
		\node[draw,rectangle,fill=gray!15,below left of=1ba] (1a) {$(\unit,a)$};
		\node[draw,rectangle,fill=gray!15,below right of=1ba] (ba) {$(b,a)$};
		\begin{scope}[node distance=15mm]
		\node[draw,rectangle,fill=gray!15,below of=1a] (a1) {$(a,\unit)$};
		\node[draw,rectangle,fill=gray!15,below of=ba] (b1) {$(b,\unit)$};
		\node[draw,rectangle,fill=gray!15,below of=aa] (11) {$(\unit,\unit)$};
		\end{scope}
		\path	(1b)	edge		node[above left,		align=center,scale=.75]{\(\labb{1}{0}{3}\)}				(1ab)
			(1b)	edge 	node[above right,		align=center,scale=.75]{\(\labb{2}{0}{3}\)\\\(\labb{3}{0}{3}\)}	(1ba)
			(1ab)	edge		node[above right,		align=center,pos=.4,scale=.75]{\(\labb{2}{0}{1}\)\\\(\labb{3}{0}{1}\)}	(1a)
			(1ab)	edge		node[above left,		align=center,scale=.75]{\(\labb{1}{1}{1}\)}	(aa)
			(1ba)	edge[] 	node[above left,		align=center,pos=.4,scale=.75]{\(\labb{1}{0}{1}\)\\\(\labb{2}{0}{1}\)}	(1a)
			(1ba)	edge[] 	node[above right,		align=center,scale=.75]{\(\labb{3}{1}{1}\)}			(ba)
			(1a)	edge[] 	node[right=0.01,		align=center,scale=.75]{\(\labb{3}{2}{1}\)}			(a1)
			(1a)	edge[] 	node[above=.2,			align=center,pos=.28,scale=.75]{\(\labb{1}{0}{1}\)\\\(\labb{2}{1}{1}\)}	(11)
			(aa)	edge[] 	node[above=.1,			align=center,pos=.3,scale=.75]{\(\labb{3}{0}{1}\)}	(a1)
			(aa)	edge[] 	node[left=.1,			align=center,scale=.75]{\(\labb{1}{0}{1}\)\\\(\labb{2}{0}{1}\)}	(11)
			(ba)	edge[] 	node[above left,		align=center,pos=.3,scale=.75]{\(\labb{1}{1}{1}\)\\\(\labb{2}{2}{1}\)}	(a1)
			(ba)	edge[] 	node[right,			align=center,scale=.75]{\(\labb{3}{0}{1}\)}	(b1)
			(a1)	edge[loop below] 	node[below,	align=center,scale=.75]{\(\labb{3}{0}{1}\)}	(a1)
			(a1)	edge		node[below=.15,		align=center,scale=.75]{\(\labb{1}{0}{1}\)\\\(\labb{2}{0}{1}\)}	(11)
			(b1)	edge[loop below]	node[below,	align=center,scale=.75]{\(\labb{3}{0}{1}\)}	(b1)
			(b1)	edge		node[below,			align=center,scale=.75]{\(\labb{1}{0}{1}\)}	(a1)
			(b1)	edge[bend left=35]	node[below=.01,	align=center,scale=.75]{\(\labb{2}{0}{1}\)}	(11)
			(11)	edge[loop below]	node[below=.1,	align=center,scale=.75]{\(\labb{1}{0}{1}\)\\\(\labb{2}{0}{1}\)\\\(\labb{3}{0}{1}\)}	(11);
	\end{scope}
\end{tikzpicture}
\end{center}
\caption{The Mealy automaton~\(\auta_b\) and the graph~\(\Phi(b)\) from Example~\ref{e:sem393}.}
\label{f:sem393}
\end{figure}

\begin{proposition} Every bounded finite-state
transformation~$t\in \SP{0}$ admits a finite orbit signalizer graph~$\Phi(t)$.
\end{proposition}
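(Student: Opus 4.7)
The plan is to prove the finiteness of $\Phi(t)$ in two stages: first, uniformly bounding the arrow labels; second, bounding the set of vertex components via a closure argument that crucially uses $t\in\SPol(0)$.

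For the arrow labels, observe that at any vertex $(s,t')$ of $\Phi$ and any letter $x$, the sequence $(\act{st'^{k}}{x})_{k\geq 0}$ takes values in the finite alphabet $\Sigma$. Pigeonhole forces a repetition within $\#\Sigma$ steps, so the arrow label $\lab{x}{m}{\ell}$ satisfies $m+\ell\leq\#\Sigma$, and this bound is uniform across $\Phi$. Now let $\mathcal{F}$ be the smallest set of finite-state transformations containing $\unit$ and $t$, closed under the operations $(u,w)\mapsto\sect{uw^m}{x}$ and $w\mapsto\sect{w^\ell}{x}$ for all $x\in\Sigma$ and all $0\leq m\leq\#\Sigma$, $1\leq\ell\leq\#\Sigma$. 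Using the explicit description of the target of each arrow in $\Phi(t)$, an induction on the distance from $(\unit,t)$ shows that both components of every reachable vertex lie in $\mathcal{F}$. It thus suffices to prove that $\mathcal{F}$ is finite.

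The finiteness of $\mathcal{F}$ is where boundedness is essential. By Proposition~\ref{p:closed} and the observation that any section of a bounded transformation is itself bounded (its activity is pointwise dominated by that of its parent), every element of $\mathcal{F}$ belongs to $\SPol(0)$. The structural characterization from Section~\ref{s:charac} then tells us that, for each $u\in\mathcal{F}$, the determinized pruned output automaton $\detee{\outs{\auta}_u}$ has no coreachable cycles, so its cycle structure is a finite union of lollipops with bounded periods. I would exploit this lollipop structure together with the uniform bound $m,\ell\leq\#\Sigma$ to show that each closure operation enlarges $\mathcal{F}$ by only finitely many new transformations, forcing the closure to stabilize at a finite stage.

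The main obstacle is this last step. One needs an invariant --- for instance a bound on the number of states of $\auta_u$, or on the length of the longest simple path in $\outs{\auta}_u$, for $u$ ranging over $\mathcal{F}$ --- that is preserved under the closure operations. I expect the argument to proceed by induction on the depth of the cycle chain of $\detee{\outs{\auta}_t}$ (which is at most one by boundedness), showing that the bounded cycle periods of $\outs{\auta}_u$ become compatible with the bounded exponents $m,\ell$, so that iterated power-and-section operations eventually revisit elements of $\mathcal{F}$ rather than producing fresh ones.
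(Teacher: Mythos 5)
There is a genuine gap, and it sits exactly where you flag it yourself: the finiteness of the closure set $\mathcal{F}$. Worse, $\mathcal{F}$ as you define it is actually \emph{infinite} in general, even for bounded $t$. Take $t_0=(\unit,t_0)[2,2]\in\SP{0}$ from Example~\ref{e:spol-zero}: since $\act{t_0}{2}=2$ and $\sect{t_0}{2}=t_0$, your operation $w\mapsto\sect{w^{\ell}}{x}$ with $\ell=2=\#\Sigma$ and $x=2$ sends $t_0^{k}$ to $t_0^{2k}$, so $\mathcal{F}$ contains every $t_0^{2^{j}}$, and these are pairwise distinct because $\presp{\,t_0\,}$ is the free monoid (Example~\ref{e:semST}). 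The reason your over-approximation explodes is that by allowing \emph{all} exponents $\dindex,\dperiod\le\#\Sigma$ applied to \emph{arbitrary} pairs from $\mathcal{F}$, you discard the one constraint that makes the proposition true: in $\Phi(t)$ the exponent $\dindex$ is \emph{minimal} for the given letter, so the intermediate output words are pairwise distinct. Your candidate invariants cannot rescue this: membership in $\SP{0}$ does not bound the number of candidate transformations (all powers $t_0^{k}$ lie in $\SP{0}$), and a bound on the cycle periods of $\outs{\auta}_u$ is not a bound on the number of possible $u$. So the reduction to $\mathcal{F}$ fails, and the ``main obstacle'' you acknowledge is precisely the content of the proposition, left unproven.

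The missing idea is a counting argument that uses the activity itself rather than the structure of $\outs{\auta}_u$. Unfolding the definition of the arrows, every vertex of $\Phi(t)$ is a pair $(r,s)$ with $rs=\sect{t^{K}}{\mot{u}}$ for some word $\mot{u}$, where $K$ is chosen so that the images $\act{t^{i}}{\mot{u}}$ for $i<K$ are pairwise different. Writing $\sect{t^{K}}{\mot{u}}=\sect{t}{\mot{u}}\cdot\sect{t}{\act{t}{\mot{u}}}\cdots\sect{t}{\act{t^{K-1}}{\mot{u}}}$, each index $i$ with $\sect{t}{\act{t^{i}}{\mot{u}}}\not=\unit$ contributes a distinct output word counted by $\alpha_t$, so at most $C=\sup_n\alpha_t(n)$ of the $K$ factors are nontrivial. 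Each nontrivial factor lies in the finite set $Q(t)$, hence the vertices range over a finite set and $\Phi(t)$ is finite. Your pigeonhole bound $\dindex+\dperiod\le\#\Sigma$ on the labels is correct but plays no role in this argument.
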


\begin{proof}
The activity~$\alpha_t$ of~\(t\in \SP{0}\) is uniformly bounded by some constant~\(C\):
\[\#\{\mot{v}\in\Sigma^n:\exists \mot{u}\in\Sigma^n,~\sect{t}{\mot{u}}\not=\unit~\hbox{and}~\act{t}{\mot{u}}=\mot{v}\} \leq C\hbox{\quad for\quad}n\geq0\ .\]
Now the vertices of the graph~\(\Phi(t)\) are those pairs~\(\left( r,s \right)\) with~\(rs=\sect{t^K}{\mot{u}}\),
where~\(K\) is the greatest integer such that the images \(\act{{t^i}}{\mot{u}}\) for~\(i< K\) are pairwise different.
Hence \(rs\) is the product of at most \(C\) nontrivial elements. Moreover these elements \(\sect{t^i}{\mot{u}}\)
for~\(i <  K\) lie in a finite set, as \(t\) is a finite-state transformation, hence the vertices belong to a finite set. We conclude that the graph~\(\Phi(t)\) is finite.
\end{proof}

\begin{corollary} The  \OP\ is decidable for
\(\SPol(0)\).
\end{corollary}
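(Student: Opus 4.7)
The plan is straightforward: chain together the two preceding results, supplementing them with the effectivity of actually constructing $\Phi(t)$. Given $t\in\SPol(0)$, the preceding proposition asserts that $\Phi(t)$ is finite, and the preceding theorem says that the \OP is decidable whenever $\Phi(t)$ is finite. What remains is to exhibit an algorithm that builds $\Phi(t)$ in finite time, since finiteness alone is a priori a non-constructive statement.

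I would proceed by a breadth-first saturation starting from the source vertex $(\unit,t)$. At each discovered vertex $(s',t')$ and each letter $x\in\Sigma$, one searches incrementally for the minimal pair $(m,\ell)$ satisfying $\act{s' t'^{m+\ell}}{x}=\act{s' t'^m}{x}$; each such test is decidable, since $s' t'^k$ is a finite-state transformation that one can assemble from $s'$ and $t'$ by composition of Mealy automata, and its action on a single letter is then immediate. The successor vertex $(\sect{r}{x},\sect{t'^{\ell}}{\act{r}{x}})$ for $r=s' t'^m$ is added to the worklist, after comparing it against previously discovered vertices---equality of two finite-state transformations being decidable via the classical \WP in $\FE$ (Mealy minimization). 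Termination follows from the preceding proposition: the reachable vertex set is finite, with an a priori bound in terms of the activity constant $C$, which is itself computable via the structural characterization of Section~\ref{s:charac}.

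Once $\Phi(t)$ has been constructed, I would extract its finitely many simple cycles $\kappa$, compute $i^-(\kappa)$ and $p(\kappa)$ from the edge labels $\lab{x}{m}{\ell}$, and invoke the preceding theorem: $t$ has infinite order iff some simple cycle satisfies $i^-(\kappa)>0$ or $p(\kappa)>1$. In the remaining finite-order case, the bounds $i^{-}_{t}\le i^{\phantom{\pm}\!\!\!}_{t}\le i^{+}_{t}$ and the period $p_t$ supplied by $\Phi(t)$ are now explicit finite numbers, and one recovers the minimal relation $t^r=t^s$ by a bounded search using decidability of equality in $\FE$.

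The true work is already encapsulated in the preceding proposition; the corollary itself is essentially a bookkeeping step. The only place I expect to be careful is the effective termination of the saturation, which I would address by combining the explicit cardinality bound on the vertex set (derived from $C$ and $\card{Q(t)}$) with the decidability of equality in $\FE$; everything else is routine finite-state manipulation.
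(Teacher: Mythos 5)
Your proposal is correct and follows exactly the route the paper intends: the corollary is an immediate combination of the preceding proposition (finiteness of \(\Phi(t)\) for \(t\in\SPol(0)\)) with the preceding theorem (decidability of the \OP when \(\Phi(t)\) is finite), and the paper gives no further argument. Your additional care about effectively constructing \(\Phi(t)\) by saturation, with termination guaranteed by the finiteness of the vertex set and vertex comparison via the decidable \WP in \(\FE\), is a sound elaboration of what the paper leaves implicit.
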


\bibliographystyle{amsplain}

\bibliography{bibliSPol}

\end{document}